\documentclass[pra,aps,twocolumn,reprint,showpacs,amsmath,amssymb,groupedaddress,floatfix, superscriptaddress,longbibliography]{revtex4-1}
\usepackage{general_latex}
\usepackage[justification=centerfirst]{caption}
\usepackage{dsfont}

\usepackage{amsmath}
\usepackage{amssymb}
\usepackage{amsthm}
\theoremstyle{plain}

\newtheoremstyle{mystyle}
  {}                                      
  {}                                      
  {\itshape}                              
  {}                                      
  {\bfseries}                             
  {.}                                     
  { }                                     
  {\thmname{#1}\thmnumber{ #2}\thmnote{ (#3)}}%

\theoremstyle{mystyle}

\newtheorem*{theorem*}{Theorem}

\begin{document}
\title{Implementation-independent sufficient condition of the Knill-Laflamme type 
for the autonomous protection of logical qudits   
by strong engineered dissipation}
\author{Jae-Mo Lihm}
\affiliation{Department of Physics and Astronomy, Seoul National University,  \\
Center for Theoretical Physics, 08826 Seoul, Korea}
\author{Kyungjoo Noh}
\affiliation{Yale Quantum Institute, Yale University, New Haven, Connecticut 06520, USA}
\author{Uwe R. Fischer}
\affiliation{Center for Theoretical Physics, Department of Physics and Astronomy, Seoul National University, 08826 Seoul, Korea}
\begin{abstract}
Autonomous quantum error correction utilizes the engineered coupling of a quantum system to a dissipative ancilla to protect   
quantum logical states from decoherence. We show that the Knill-Laflamme condition, stating that the environmental error operators should act trivially on a subspace, which then becomes the code subspace, is sufficient for logical qudits to be protected against Markovian noise. 
It is proven that the error caused by the total Lindbladian evolution in the code subspace can be suppressed up to very long times in the limit of large engineered dissipation, by explicitly 
deriving how the error scales with both time and engineered dissipation strength.
To demonstrate the potential of our approach for applications, we implement our general theory with  binomial codes, a class of bosonic error-correcting codes, and outline how they can be implemented in a fully autonomous manner to protect against photon loss in a microwave cavity. 
\end{abstract}
\maketitle
\section{Introduction}  
In the majority of practically realized experiments, 
the 
unavoidable coupling to an environment leads to non-unitary evolution
 and introduces noise and dissipation into the system \cite{breuer2002theory}. This is often regarded as detrimental and hindering the control of the experiment. 
In the last two decades however, the at first sight counterintuitive fact has been established that 
dissipation can in fact {\em support} various quantum information processing tasks, such as 
quantum computation \cite{Braun2000,Verstraete,Zanardi2014,Marshall2016}, 
entanglement generation \cite{Kraus2008,Krauter2011,Lin2013,Shankar2013}, and the preparation of quantum many-body states \cite{Diehl,Diehl2011}. 

Quantum error correction (QEC) is a method of correcting for effects of 
noise in quantum computers \cite{Unruh} by quantum control \cite{Shor1995,Cory1998,Chiaverini,Schindler2011,Kelly,Ofek,Linke}. 
Conventional measurement-based QEC utilizes periodic measurement of error syndromes and an appropriate unitary evolution that eliminates the error.
This measurement-based QEC scheme can correct errors 
iff encoded logical quantum states satisfy the Knill-Laflamme condition \cite{Knill1997,Knill2000}, 
which predicates that errors can be corrected iff an 
environment 
causing the error does not gain any information about the encoded states \cite{BenyAQEC}. 

Autonomous QEC (AutoQEC) 
avoids the use of active syndrome measurements and real-time feedback control to achieve QEC, 
and instead
takes advantage of passive {\em strong engineered} dissipation, which restores the system into the original code subspace manifold. Pioneering works \cite{Paz1998,Ahn2002,Sarovar2005} mostly focused on autonomous implementation of the 3-bit repetition code against single bit-flip error, and then were generalized to all stabilizer codes \cite{Gottesman} in \cite{Oreshkov2013,Hsu2016}. Besides the stabilizer-based qubit codes, many specific AutoQEC schemes tailored to certain experimental platforms were proposed \cite{Kerckhoff2010,
Kerckhoff2011,Leghtas2013,AutoCatTheory,Kapit2015,
Kapit2016,Reiter2017,Cohen2017,Albert2018}. Notably, the simplest version of the cat code proposals \cite{Leghtas2013,AutoCatTheory,Cohen2017} was recently realized experimentally in circuit QED systems \cite{Leghtas853,Touzard2017}.

In what follows, we 
identify an {\em implementation independent}, sufficient condition for AutoQEC, of the Knill-Laflamme type.  In particular, we prove its sufficiency by providing an explicit Markovian engineered dissipation and by deriving a rigorous upper bound of logical error probability in terms of the engineered dissipation strength. In addition, we do not assume any structure of the error-correcting codes, and thus our theory also applies to the codes which are beyond the paradigm of stabilizer-based qubit codes. We illustrate this 
by proposing a fully autonomous implementation of the recently proposed 
binomial codes \cite{BinomialCode}, which protects encoded information against photon loss in a microwave cavity.

\section{AutoQEC Theorem} 
Consider an open quantum system evolving according to a 
Markovian master equation: 
\begin{align}
\frac{\partial \rho}{\partial t} = \mc{L}[\rho] &= -i[H_{\mathrm{eng}}+H_{\mathrm{sys}},\rho] 
\nonumber\\
&\quad + M\sum_{i=1}^{L}D(F_{\mathrm{eng},i})\rho + \sum_{k=1}^{N}D(F_{\mathrm{sys},k})\rho, \label{eq:master}
\end{align}
where $D(A)\rho \coloneqq A\rho A^{\dagger}-\frac{1}{2}(A^{\dagger}A\rho +\rho A^{\dagger}A )$. $H_{\mathrm{sys}}$ and $F_{\mathrm{sys}}$, and $H_{\mathrm{eng}}$ and $F_{\mathrm{eng}}$ are Hamiltonian and Lindbladian operators that are intrinsic to the system and imposed by dissipative engineering, respectively; we set $\hbar =1$. We omit the subscript $\mathrm{sys}$ from the intrinsic Lindblad operators in the following. 
We assume that intrinsic  
dissipation is uncontrollable, while engineered dissipation, in 
Markovian form, can be artificially generated
and increased at will.
In the limit $M\gg 1$ we consider, the engineered 
dissipation strength parametrized by $M$ is much larger than its intrinsic counterpart.
We discuss the practical feasibility of strong Markovian dissipation  
for a concrete implementation 
below.

We seek a condition for the code space $\mathcal{C}\subset \mathcal{H}$ 
($\text{dim}\,\mathcal{H} =n < \infty$, with $\mc{H}$ the total Hilbert space), 
under which there exist an engineered Hamiltonian $H_{\mathrm{eng}}$ and Lindblad jump operators $\{F_{\mathrm{eng},i}\}_{i=1,\cdots,L}$ which autonomously protect the code space against the Markovian error generated by the intrinsic jump operators $\lbrace F_{k}\rbrace_{k=1,\cdots,N}$. 
That a code space is {\em protected} means that for
 sufficiently strong engineered dissipation, 
any logical states in the 
$d$-dimensional code subspace $\mc{C}$ 
are left invariant until a given time $T$ and up to an arbitrarily small error $\epsilon$, 
which is defined in Eq.~\eqref{eq:thm} below. 

We find that the Knill-Laflamme condition \cite{Knill1997,Knill2000}, when 
applied to the set of intrinsic error operators $\mc{E} = \{\mathds{1},
F_k\}_{k=1,...,N}$, 
is sufficient for the autonomous protection of the qudit code subspace 
$\mathcal{C}=\textrm{span}\lbrace |W_{\mu}\rangle\rbrace_{\mu=0,\cdots,d-1}$. The condition states that 
\begin{equation} \label{eq:k-l}
\Pi_\mathcal{C} E^\dagger_{l'} E_l \Pi_\mathcal{C} = c_{l'l} \Pi_\mathcal{C}, 
\end{equation}
for all $E_{l},E_{l'}\in\mathcal{E}$ and $l,l'\in \lbrace 0,...,N\rbrace$, 
where $\langle W_{\mu}|W_{\mu'} \rangle =\delta_{\mu\mu'}$, $\Pi_\mathcal{C} \coloneqq\sum_{\mu=0}^{d-1}|W_{\mu}\rangle \langle W_{\mu}| $ and $c_{k'k}$ are constants. Our main technical result, then, is as follows:

\begin{theorem*}[AutoQEC condition] 
If a code space $\mathcal{C}$ satisfies the Knill-Laflamme condition for the error set $\mc{E} = \{\mathds{1}, F_k\}_{k=1,...,N}$ (see Eq.~\eqref{eq:k-l}), there exists a set of engineered dissipative jump operators $F_{\mathrm{eng},i}$ such that, for any initial density matrix $\rho_{\mathcal{C}}$ in the code space, 
\begin{equation}    \label{eq:thm}
\epsilon(T;\rho_{\mathcal{C}})  \coloneqq\opnorm{ e^{T\mc{L}} \rho_{\mathcal{C}} - \rho_{\mathcal{C}}  } \leq 
\mc{O} \left({\gamma T}/{M}\right), 
\end{equation} 
where $\gamma \coloneqq\sum_{k=1}^N \parallel \!\! F_k \!\! \parallel^2 $ denotes the measure of intrinsic dissipation strength. Throughout, we use $ \parallel \!\! X \!\! \parallel \coloneqq \textrm{sup}_{\opnorm{v}=1}\! \parallel \!\! {Xv}\!\! \parallel$ as the norm of an operator $X$, 
 where $v$ are Hilbert space vectors, while superoperators have 
norm $\opnorm{\mc{Y}} \coloneqq \textrm{sup}_{\opnorm{X}=1} \opnorm{\mc{Y}(X)}$.  
\label{theorem:AutoQEC condition}
\end{theorem*}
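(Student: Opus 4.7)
My plan is to use the Knill--Laflamme condition (2) in two distinct ways: first to explicitly build an engineered recovery dissipator whose steady-state manifold coincides with $\mathcal{C}$, and then, via the same condition, to show that the leading-order action of the intrinsic noise inside this slow manifold cancels. A singular-perturbation (adiabatic elimination) analysis of $\mathcal{L}=M\mathcal{L}_{\mathrm{eng}}+\mathcal{V}$, where $\mathcal{V}$ collects the intrinsic Hamiltonian and dissipative terms and I set $H_{\mathrm{eng}}=0$ for simplicity, will then convert the strength $M$ of the engineered dissipation into the small parameter $1/M$ controlling $\epsilon(T;\rho_{\mathcal{C}})$.

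\emph{Construction of the recovery dissipators.} The matrix $c=(c_{l'l})$ in (2) is Hermitian and positive semidefinite, so I can diagonalize it by a unitary rotation $\tilde F_j=\sum_l u_{jl}F_l$ with $\Pi_{\mathcal{C}}\tilde F_{j'}^{\dagger}\tilde F_j\Pi_{\mathcal{C}}=d_j\delta_{jj'}\Pi_{\mathcal{C}}$. Applying (2) once with $\mathds{1}$ as a factor shows that each $\tilde F_j\Pi_{\mathcal{C}}$ is (up to $\sqrt{d_j}$) an isometry from $\mathcal{C}$ onto an ``error subspace'' $\mathcal{C}_j$, and that $\mathcal{C},\mathcal{C}_1,\mathcal{C}_2,\ldots$ are mutually orthogonal. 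Polar decomposition then yields partial isometries $U_j:\mathcal{C}_j\to\mathcal{C}$ inverting these maps, and I take $F_{\mathrm{eng},j}=U_j\Pi_{\mathcal{C}_j}$. By construction these jumps annihilate any $\rho$ supported in $\mathcal{C}$ (so $\mathcal{C}\subset\ker\mathcal{L}_{\mathrm{eng}}$) and, on the orthogonal complement, deterministically push population back into $\mathcal{C}$. A spectral argument, using that $\{F_{\mathrm{eng},j}\}$ is a finite collection of uniformly bounded jumps whose supports tile $\mathcal{C}^{\perp}$, gives an $M$-independent positive gap $\Delta$ of $\mathcal{L}_{\mathrm{eng}}$ above its kernel.

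\emph{Perturbation and the key cancellation.} Let $\mathcal{P}$ be the projector onto $\ker\mathcal{L}_{\mathrm{eng}}$ and $\mathcal{Q}=1-\mathcal{P}$. I would expand $e^{T\mathcal{L}}\rho_{\mathcal{C}}$ in a Dyson series around $M\mathcal{L}_{\mathrm{eng}}$, so that each $\mathcal{Q}\mathcal{V}\mathcal{P}$ factor is contracted against the reduced resolvent $(\mathcal{L}_{\mathrm{eng}}|_{\mathrm{ran}\,\mathcal{Q}})^{-1}$, of operator norm $1/\Delta$, picking up a small factor $\|\mathcal{V}\|/(M\Delta)=\mc{O}(\gamma/M)$. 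The central cancellation enforced by (2) occurs in $\mathcal{P}\mathcal{V}\mathcal{P}[\rho_{\mathcal{C}}]$: after rotating to the $\tilde F_j$, the ``recovered jump'' piece $\sum_j U_j\tilde F_j\rho_{\mathcal{C}}\tilde F_j^{\dagger}U_j^{\dagger}=\sum_j d_j\rho_{\mathcal{C}}=\mathrm{Tr}(c)\,\rho_{\mathcal{C}}$ is exactly cancelled by the projected anticommutator $\tfrac12\mathcal{P}[\sum_k\{F_k^{\dagger}F_k,\rho_{\mathcal{C}}\}]=\sum_k c_{kk}\,\rho_{\mathcal{C}}=\mathrm{Tr}(c)\,\rho_{\mathcal{C}}$, so that $\mathcal{P}\mathcal{V}\mathcal{P}\rho_{\mathcal{C}}=0$. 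Combining this vanishing of the leading effective generator with the resolvent estimate above, and with the CP-TP contractivity $\|e^{t\mathcal{L}}\|\le 1$ used to tame the outer propagators in the Dyson series, yields the claimed $\epsilon(T;\rho_{\mathcal{C}})=\mc{O}(\gamma T/M)$.

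\emph{Main obstacle.} The delicate step is uniformity in $T$: a naive term-by-term Dyson estimate accumulates a factor $T\|\mathcal{V}\|$ per integration while gaining only a single $1/M$ per resolvent, so to keep the overall $1/M$ suppression intact for times $T\gg 1/\gamma$ one needs either a careful resummation of the series or a direct Lyapunov-type control on $\|\mathcal{Q}\rho(t)\|$ enforced by the gap $\Delta$. A secondary subtlety is verifying at the superoperator level (not merely on the pure-state Hilbert space) that $\Delta$ is genuinely $M$-independent for arbitrary KL codes, since it is precisely the product $M\Delta$ that anchors the $1/M$ prefactor in the final bound.
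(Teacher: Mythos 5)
Your high-level architecture --- an explicit engineered recovery in Lindblad form, followed by an adiabatic-elimination expansion in $1/M$ in which the Knill--Laflamme condition forces the leading effective generator on the slow manifold to vanish --- is the same as the paper's. However, there is a concrete gap in your construction of $\mathcal{L}_{\mathrm{eng}}$ that breaks the central cancellation. The error subspaces $\mathcal{C}_j=\tilde F_j\mathcal{C}$ together with $\mathcal{C}$ do \emph{not} tile the Hilbert space in general, so your jumps $F_{\mathrm{eng},j}=U_j\Pi_{\mathcal{C}_j}$ leave a residual \emph{dark} subspace $\mathcal{R}=\mathcal{H}\ominus\big(\mathcal{C}\oplus\bigoplus_j\mathcal{C}_j\big)$ untouched. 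Then $\ker\mathcal{L}_{\mathrm{eng}}$ contains not only operators on $\mathcal{C}$ but also all coherences between $\mathcal{C}$ and $\mathcal{R}$, and the no-jump part of $\mathcal{V}$ feeds exactly such coherences: $\Pi_{\mathcal{R}}F_k^{\dagger}F_k\Pi_{\mathcal{C}}$ is generically nonzero (for the binomial code, $\langle\phi_1|a^{\dagger}a|W_0\rangle=-2$ with $|\phi_1\rangle=(|0\rangle-|4\rangle)/\sqrt2$). Hence $\mathcal{P}\mathcal{V}\mathcal{P}\rho_{\mathcal{C}}\neq0$ and the drift within the slow manifold is $\mathcal{O}(\gamma)$, not $\mathcal{O}(\gamma/M)$, so the bound fails. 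This is precisely why the paper supplements your corrective isometries with the ``preventive'' jumps $|\Phi_p\rangle\langle\phi_p|$ draining $\mathcal{H}\setminus\mathcal{S}_{\mathrm{ccs}}$ in Eq.~\eqref{eq:Feng_m}; Fig.~\ref{fig:Preventive needed}(a) shows numerically that without them the scheme does not improve even as $M\to\infty$. A second omission of the same kind: setting $H_{\mathrm{eng}}=0$ leaves $-i[\Pi_{\mathcal{C}}H_{\mathrm{sys}}\Pi_{\mathcal{C}},\cdot\,]$ inside $\mathcal{P}\mathcal{V}\mathcal{P}$, which is generically nonzero; the paper cancels it by choosing $H_{\mathrm{eng}}=-H_{\mathrm{sys}}$.

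On the obstacle you flag --- uniformity in $T$ --- the paper's resolution is not a Dyson resummation but Kato perturbation theory for the spectral projector $\mathcal{P}$ of $\mathcal{L}/M$ onto the eigenspaces emanating from $\ker(\mathcal{L}_0+\mathcal{L}_1/M)$: one shows the first-order effective generator $\mathcal{P}_{\mathrm{e}}\mathcal{L}_2\mathcal{P}_{\mathrm{e}}$ acquires an \emph{extra} factor $1/M$ because $\|\mathcal{P}_{\mathrm{e}}-\mathcal{P}_{\mathrm{C}}\|=\mathcal{O}(1/M)$ while $\mathcal{P}_{\mathrm{C}}\mathcal{L}_2\mathcal{P}_{\mathrm{C}}=0$ exactly, and then one exponentiates the effective generator and uses contractivity of the exact semigroup. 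A Lyapunov bound on $\|\mathcal{Q}\rho(t)\|$ alone would not suffice; the essential point is that the induced motion \emph{within} the slow manifold is $\mathcal{O}(\gamma/M)$ per unit time, which is what converts linear-in-$T$ accumulation into the claimed $\mathcal{O}(\gamma T/M)$. Your proposal correctly identifies this as the crux but does not supply the mechanism.
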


\begin{proof}[Summary of the proof]
We first note that the intrinsic Hamiltonian $H_{\mathrm{sys}}$ can be eliminated by setting $H_{\mathrm{eng}}=-H_{\mathrm{sys}}$. Consider a code space $\mathcal{C}$ satisfying the AutoQEC condition in Eq. \eqref{eq:k-l}. The density matrix generated by this code space is represented by the cross-striped areas in Fig.~\ref{fig:schematic}. The intrinsic jump operators $F_{k}$ corrupt the states in $\mathcal{C}$ and move them into the corrupted code space $\mathcal{S}_{\mathrm{ccs}}$ (cf.~green dashed line in Fig. \ref{fig:schematic}). We {explicitly} construct a set of engineered jump operators $F_{\mathrm{eng},i}$ (see Eq.\eqref{eq:Feng_m}) which pump the corrupted states back into the code space $\mathcal{C}$ without loss of coherence (blue double line in Fig. \ref{fig:schematic}). We then show that the $d \times d$ blocks of gray-shaded and cross-striped areas in Fig.~\ref{fig:schematic} form a noiseless subsystem \cite{Zanardi1997,Lidar1998,Knill2000,Kempe2001,Lidar2012}, 
if the intrinsic dissipation operator is limited to act on the code subspace, cf.~Appendix \ref{AppA}. Finally, we show that the remaining dissipation increases the error only to $\ord(\gamma T/M)$, as indicated on the right-hand side of Eq.~\eqref{eq:thm}.
For details on the proof, we refer the reader to Appendix \ref{AppB}.
\end{proof}


\begin{figure}[t]
\centering
\includegraphics[width=0.375\textwidth]{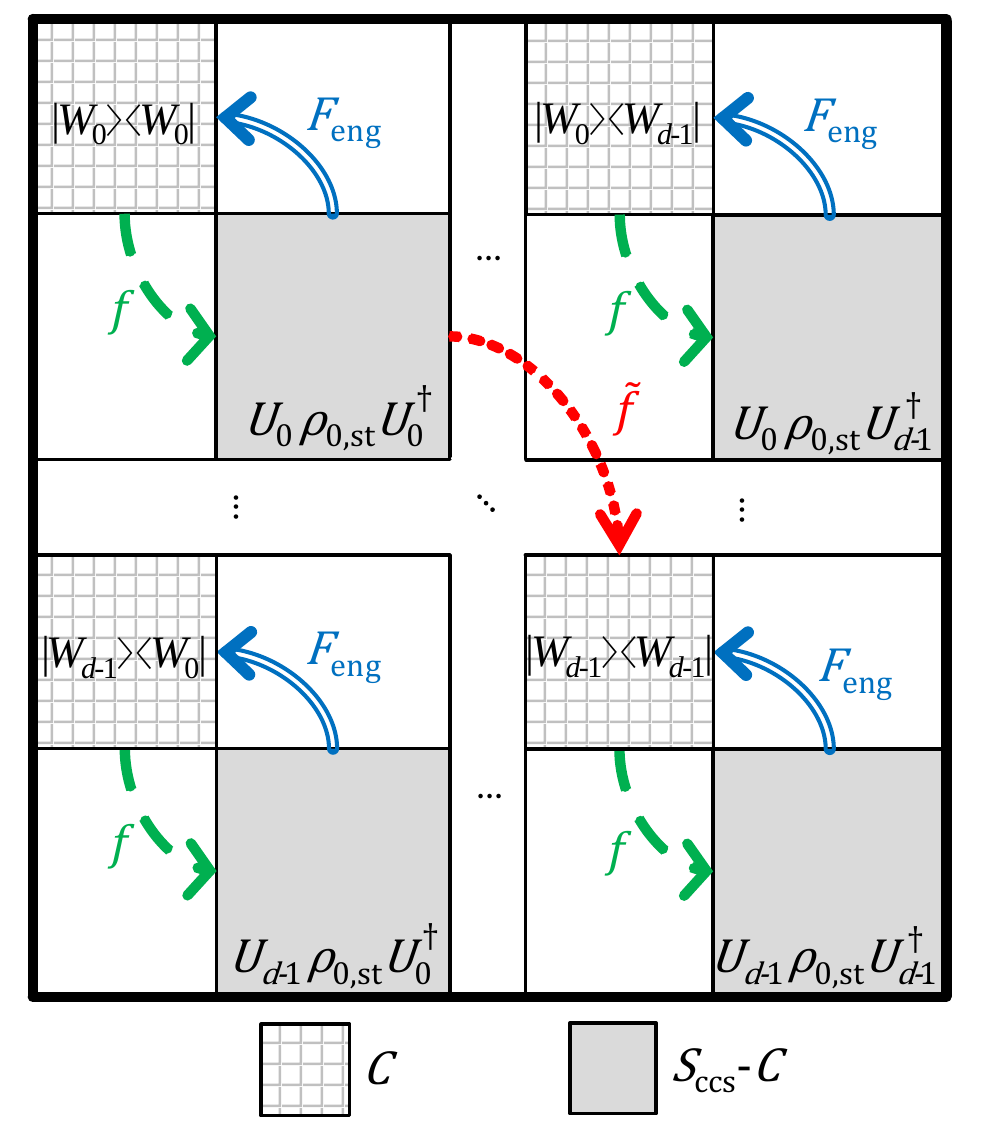}
\caption{{\em Schematic of  
density matrix , Lindblad jump operators, and noiseless subsystem symmetry}. The total Hilbert space is divided into $d\times d$ arrays of $d^{2}$ large blocks.  
The logical code space $\mathcal{C}$ is formed by the
cross-striped areas. The arrows represent dissipative jump operators, and
the gray-shaded areas represent the space of corrupted states, connected via Lindblad operator $f$ (green dashed, Eq.~\eqref{eq:Ff}) to the code states.  
Engineered dissipation $F_{\mathrm{eng}}$ (blue double line, Eq.~\eqref{eq:Feng_m}) pumps the corrupted states back to the code space, and in consequence each large block supports a
steady state $U_{\mu}\rho_{0,\textrm{st}}U^{\dagger}_{\nu}$, and $\mathcal{S}_\textrm{ccs}$ forms a noiseless subsystem. $U_{\mu}$ intertwines between $\mathcal{S}_{0}$ and $\mathcal{S}_{\mu}$, and is defined in Eq.~\eqref{eq:unitary}. 
The remaining intrinsic dissipation $\widetilde{f}$ (red-dotted, Eq.~\eqref{eq:Ff}, for clarity only one such process is shown) breaks this noiseless subsystem symmetry and induces an undesired error.  
}
\label{fig:schematic}
\end{figure}

We observe that the relation \eqref{eq:thm} implies protection of a logical qudit, since one can choose $M \gg \gamma T/\epsilon$ to obtain any desired 
error less than $\epsilon$. Also, we remark that Eq.~\eqref{eq:k-l} was previously identified as the necessary and sufficient condition for the 
elimination of the first order error contribution at short times in \cite{Beny2011}. 
Although related, our approach notably differs from the earlier one 
in that we construct an explicit recovery of the Lindbladian form to achieve AutoQEC (see 
Eq.~\eqref{eq:Feng_m} below), and consider the limit of arbitrarily long times relevant for quantum computing applications.

\section{Design of engineered dissipation} 
Suppose that we are given a code space $\mathcal{C} = \textrm{span} \lbrace |W_{\mu}\rangle \rbrace_{\mu=0,\cdots,d-1}$, satisfying the Knill-Laflamme condition for the error set $\mc{E} = \{\mathds{1}, F_k\}_{k=1,...,N}$ (cf.~Eq.~\eqref{eq:k-l}). Then, we define $d$ subspaces $\mc{S}_\mu$ for each $\mu \in \lbrace 0,\cdots,d-1 \rbrace$:   
\begin{equation} \label{eq:subspace}
    \mc{S}_\mu = \textrm{span}\{ \{|W_{\mu}\rangle\}	\cup \{F_k |W_{\mu}\rangle\ : k = 1,...,N \} \} .
\end{equation}
All $\mc{S}_{\mu}$ are disjoint and have the same dimension, as follows from 
the Knill-Laflamme condition.
We define $|W_{\mu};i\rangle$ such that $\{|W_{\mu}\rangle, |W_{\mu};i\rangle \}$ is an 
orthonormal basis set for $\mc{S}_{\mu}$ ($\mu=0,\cdots,d-1$), where $i=1,\cdots,m-1$ so that the $\mc{S}_{\mu}$ are $m$-dimensional ($m\le N+1$). This basis should be chosen such that $\langle W_{\mu};i|F_{k} |W_{\mu} \rangle$ is $\mu$-independent (cf.~the discussion around Eq.~\eqref{eq:steady states when the error is limited to the code subspace} below)
which, due to the Knill-Laflamme condition Eq.~\eqref{eq:k-l}, can be realized by, e.g.,  
Gram-Schmidt orthonormalization \cite{friedberg2003linear} with $F_k |W_{\mu}\rangle$ of the same iteration order for all $\mu\in\lbrace 0,\cdots,d-1\rbrace$.  
We denote $\mc{S}_{\rm ccs} = \cup_{\mu=0}^{d-1} \mc{S}_{\mu}$ a \textit{corrupted code space}. 
Also, let $\{|\phi_p \rangle\}_{p=0,\cdots,n-md-1}$ represent an orthonormal basis of $\mc{H}-\mc{S}_{\rm ccs} $. 

We design the engineered dissipation as follows: 
\begin{eqnarray}\arraycolsep=1pt\def\arraystretch{2.2}
	F_{\textrm{eng},i} &=&\left\{\begin{array}{l@{\qquad}l}  
	\sum_{\mu=0}^{d-1}|W_{\mu}\rangle\langle W_{\mu};i|  &\! (1 \le i \le m-1)
	\vspace*{0.5em}\\
	|\Phi_{i-m}\rangle\langle \phi_{i-m}| &\!  (m \le i \le L).  
	\end{array}
	\right.
	\label{eq:Feng_m}
\end{eqnarray}
where $L = n-m(d-1)-1$ and $|\Phi_{p}\rangle$ ($p=0,\cdots,n-md-1$) can be any state in $\mathcal{S}_{\textrm{ccs}}$.   
We term the first $m-1$ jump operators \textit{corrective} jumps because they pump erroneous states $|W_{\mu};i\rangle$ back into the code state $|W_{\mu}\rangle$ coherently, 
similarly to measurement-based QEC. The remaining jumps 
(i.e., $F_{\textrm{eng},i}$ with $m\leq i\leq L$) we term \textit{preventive} jump operators: They force states in $\mathcal{H}-\mathcal{S}_{\textrm{ccs}}$ to decay into $\mathcal{S}_{\textrm{ccs}}$, preventing further uncorrectable errors.

The necessity of corrective jump operators is clear from the perspective of measurement-based QEC. However, it might not be immediately obvious whether the preventive jump operators are essential for a successful AutoQEC. In this regard, we emphasize that the preventive jump operators 
are indeed crucial to achieve the desired error scaling in Eq.~\eqref{eq:thm} since otherwise any leakage to $\mathcal{H}-\mathcal{S}_{\textrm{ccs}}$ cannot be recovered (see Fig. \ref{fig:Preventive needed} (a) and the discussion in subsection \ref{subsection:Binomial code against photon loss error} for an illustration).

In addition, while it is essential that the corrective jumps 
are given by the first line of Eq.~\eqref{eq:Feng_m}, preventive jump operators can be chosen arbitrarily as long as  $\text{im}(F_{\textrm{eng},i})\subseteq \mc{S}_{\rm ccs}$ $\forall i\in\lbrace m,\cdots,L\rbrace$ and $\text{ker}(\sum_{i = m}^{L}F_{\textrm{eng},i}^{\dagger}F_{\textrm{eng},i})=\mc{S}_{\rm ccs} $ hold, where $\text{im}(A)$ and $\text{ker}(A)$ represent image space and null space (kernel) of $A$, respectively (cf.~subsection \ref{subsection:Comparison with measurement-based QEC}).

\section{Examples for engineering jumps}
\subsection{3-bit repetition code against bit-flip error} 
In order to show that earlier results can be recovered from our general design, we consider the 3-bit repetition code $|W_{0}\rangle = |000\rangle$, $|W_{1}\rangle=|111\rangle$ 
in the presence of bit flip error: 
\begin{equation}
\frac{d\rho(t)}{dt} = M\gamma\sum_{i}D(F_{\mathrm{eng},i})\rho(t) + \gamma\sum_{k=1}^{3}D(X_{k})\rho(t),  
\end{equation} 
where $X_k$ represents the Pauli operator $X$ (the bit flip by $\sigma_x$) acting on the $k^{\textrm{th}}$ qubit (e.g., $X_{1} = XII$, where $I$ is the identity operator). 
In this case, the AutoQEC error set is given by $\mathcal{E} = \lbrace III,XII,IXI,IIX \rbrace$, and the 3-bit repetition code satisfies the Knill-Laflamme condition for this error set. Then, following our jump operator design principle, we find $|W_{\mu};i\rangle = X_{i}|W_{\mu}\rangle$ ($i=1,2,3$), and thus 
\begin{align}
F_{\mathrm{eng},1} &= |000\rangle\langle 100| + |111\rangle\langle 011| , 
\nonumber\\
F_{\mathrm{eng},2} &= |000\rangle\langle 010| + |111\rangle\langle 101| , 
\nonumber\\
F_{\mathrm{eng},3} &= |000\rangle\langle 001| + |111\rangle\langle 110| , 
\end{align}
which are consistent with the earlier designs in \cite{Paz1998,Ahn2002,Sarovar2005,Reiter2017}. Note that all these jump operators are corrective, and we do not need any preventive jump operators because $\lbrace |W_{\mu}\rangle,|W_{\mu};i\rangle\rbrace_{\mu=0,1}^{i=1,2,3}$ spans the entire Hilbert space (i.e., $\mathcal{H} = \mathcal{S}_{\mathrm{ccs}}$).

\begin{figure*}[t!]
\centering
\includegraphics[width=0.99\textwidth]{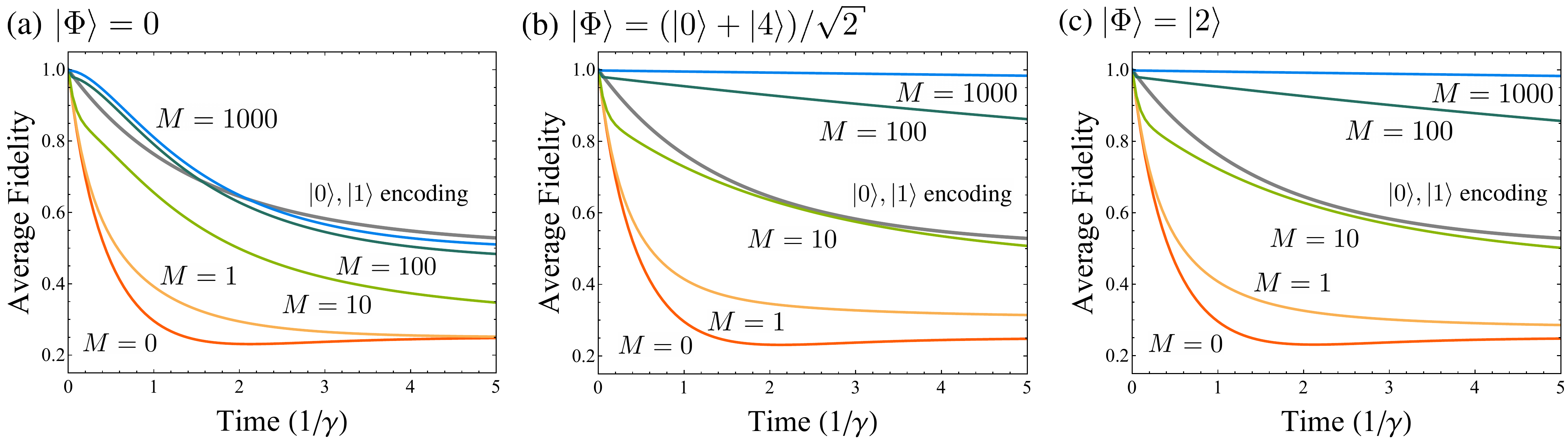}
\caption{{\em Averaged fidelity for various choices of $|\Phi\rangle$ in \eqref{eq:binomial code engineered dissipation}.} 
Fidelity is averaged over $6$ initial input logical states 
$|\psi_{\rm in}\rangle = |W_{0}\rangle,|W_{1}\rangle,(|W_{0}\rangle \pm |W_{1}\rangle)/\sqrt{2}$ and $(|W_{0}\rangle \pm i |W_{1}\rangle)/\sqrt{2}$. The solid gray line represents the fidelity of the lowest energy uncorrected physical qubit (i.e., $|W^{\textrm{phys}}_{0}\rangle = |0\rangle$, $|W^{\textrm{phys}}_{1}\rangle=|1\rangle$) for single photon loss $D(a)$ with decay rate $\gamma$. The other data shows fidelities of the autonomously corrected binomial code with engineered dissipation strength $0\le M\le 1000$ under the same loss rate $\gamma$. The engineered jump operators are given in Eq.~\eqref{eq:binomial code engineered dissipation} and $|\Phi\rangle$ was chosen to be (a) $|\Phi\rangle = 0$, (b) $|\Phi\rangle = (|0\rangle+|4\rangle)/\sqrt{2}$ and (c) $|\Phi\rangle = |2\rangle$. 
}
\label{fig:Preventive needed}
\end{figure*}

\subsection{Binomial code against photon loss error}  
\label{subsection:Binomial code against photon loss error}
In order to exhibit the generality of our approach, we discuss the autonomous protection of the binomial code $|W_{0}\rangle = (|0\rangle+|4\rangle)/\sqrt{2}$, $|W_{1}\rangle=|2\rangle$ (a recently proposed bosonic code; see \cite{BinomialCode}) against the error caused by photon loss, i.e., we consider 
\begin{equation}
\frac{d\rho(t)}{dt} = M\gamma\sum_{i}D(F_{\mathrm{eng},i})\rho(t) + \gamma\sum_{k=1}^{3}D(a)\rho(t),  
\end{equation}   
where $a$ is the annihilation operator of the single bosonic mode. In this case, the AutoQEC error set is given by $\mathcal{E}= \lbrace I,a \rbrace$, and the binomial code satisfies the Knill-Laflamme condition for this error set.

Upon a single photon loss, logical states of the binomial code are corrupted as $a|W_{0}\rangle = \sqrt{2}|3\rangle$ and $a|W_{1}\rangle=\sqrt{2}|1\rangle$, and thus we have $|W_{0};1\rangle = |3\rangle$ and $|W_{1};1\rangle = |1\rangle$. The four states $\lbrace |W_{\mu}\rangle,|W_{\mu};1\rangle \rbrace_{\mu=0,1}$ however do not span the entire Hilbert space $\mathcal{H}_{4}=\textrm{span}\lbrace |0\rangle, \cdots,|4\rangle\rbrace$ due to the residual basis vector $|\phi_{1}\rangle = (|0\rangle-|4\rangle)/\sqrt{2}$. Then, we get the following engineered jump operators from Eq.~\eqref{eq:Feng_m}: 
\begin{align}
F_{\text{eng},1} &= \frac{1}{\sqrt{2}}(|0\rangle+|4\rangle)\langle 3| +|2\rangle\langle 1|, 
\nonumber\\
F_{\text{eng},2} &= |\Phi\rangle\langle 0|-|\Phi\rangle\langle 4|, \label{eq:binomial code engineered dissipation} \end{align}
where $|\Phi\rangle$ can be any state in $\mc{H}_{4}$ orthogonal to $|0\rangle-|4\rangle$. Note that $F_{\text{eng},1}$ is a corrective jump operator while $F_{\text{eng},2}$ is a preventive jump operator.

We stress that the binomial code is not described by a set of stabilizers, and the proposed engineered jump operators in Eq.~\eqref{eq:binomial code engineered dissipation}and, in particular, the preventive jump operator $F_{\mathrm{eng},2}$, are not derivable from the earlier framework laid down, e.g., in \cite{Oreshkov2013,Hsu2016}. To illustrate that the preventive jump operator indeed plays a crucial role, we plot in Fig.~\ref{fig:Preventive needed} the average fidelity of $6$ representative logical states of the autonomously protected binomial code with and without the preventive jump operator. The fidelity is defined as $\langle \psi_{\textrm{in}} | e^{\mathcal{L}t} (|\psi_{\textrm{in}}\rangle\langle\psi_{\textrm{in}}|) |\psi_{\textrm{in}}\rangle$, for an input state $|\psi_{\textrm{in}}\rangle$.

As can be seen from Fig.~\ref{fig:Preventive needed} (a), the autonomously protected binomial code without the preventive jump operator $F_{\mathrm{eng},2}$ (or $|\Phi\rangle=0$) can barely perform comparably to the unprotected physical qubit with the least energy (i.e., $|W_{0}^{\textrm{phys}}\rangle=|0\rangle$, $|W_{1}^{\textrm{phys}}\rangle=|1\rangle$ possessing the longest physical qubit lifetime), even in the $M\rightarrow \infty$ limit. On the other hand, if we add a preventive jump operator (Fig.~\ref{fig:Preventive needed} (b); $|\Phi\rangle = (|0\rangle+|4\rangle)/\sqrt{2}$), the autonomously protected binomial code begins to perform comparably to the physical qubit at $M\simeq 10$ (i.e., at the break-even point; see \cite{Ofek,Linke}), and allows for logical qubit lifetimes much longer than those of the physical qubit if $M \gtrsim 100$, as guaranteed by the AutoQEC Theorem in section \ref{theorem:AutoQEC condition}. Thus, the preventive jump operator $F_{\textrm{eng},2}$ is essential for the superior performance of the binomial code. We also remark that this superior performance (in particular, the scaling $\epsilon(T) = \mathcal{O}(\gamma T /M)$) does not depend on the choice of $|\Phi\rangle$ as long as
$\langle \Phi|\Phi\rangle = \mathcal{O}(1)$ (cf.~Fig.~\ref{fig:Preventive needed} (b,c)).

\subsection{Comparison with measurement-based QEC} 
\label{subsection:Comparison with measurement-based QEC}
The Lindbladian generator $D(A)$ consists of the \textit{jump} term $A\bullet A^{\dagger}$ and the \textit{no-jump evolution} term $-\frac{1}{2} \lbrace A^{\dagger}A,\bullet \rbrace$. We identify that the corrective jump operators suppress the adverse effects of jump-type errors, while the preventive jump operators suppress the effects of the no-jump evolution. Since the jump-type error induces population transfer from a code state $|W_{\mu}\rangle$ to a corrupted state $|W_{\mu};i\rangle$, it is essential to have the corrective jump operators in the exact form as given in the first line of Eq.~\eqref{eq:Feng_m}. On the other hand, since the no-jump evolution only accumulates an undesirable coherence between a code state $|W_{\mu}\rangle$ and a residual state $|\phi_{p}\rangle$ without direct population transfer, any preventive jump operators emptying such coherence is sufficient. Thus, we are presented with a flexibility in choosing $|\Phi_{p}\rangle$ for preventive jump operators, which may be exploited for practical experimental implementation (see, for example, the discussion below Eq.~\eqref{eq:binomial code required Hamiltonian interaction}).

In measurement-based QEC, on the other hand, the undesired coherence caused by no-jump evolution is fixed by a unitary rotation conditioned on not detecting the jump-type errors. 
The rotation angle should however be {\em fine-tuned} to precisely counter the accumulation of the
undesired coherence. 

To {make} our discussion more concrete, let us briefly review the measurement-based QEC of the binomial code outlined in \cite{BinomialCode}, and compare it with the AutoQEC scheme presented above. In measurement-based QEC of the binomial code, the error syndrome is extracted by a photon-number parity measurement \cite{Sun2014,Rosenblum2018}. If the measured parity is odd, we infer that photon loss occurred, and apply the corrective unitary operation 
\begin{align}
U_{1} &= \frac{1}{\sqrt{2}} \big{(} |0\rangle+|4\rangle \big{)}\langle 3| + |2\rangle\langle 1| + \textrm{h.c.}
\nonumber\\
&\quad + \frac{1}{2}\big{(} |0\rangle-|4\rangle \big{)} \big{(} \langle 0|-\langle 4| \big{)} , 
\end{align}
cf.~the text below Eq. (2) in \cite{BinomialCode}. On the other hand, if the measured parity is even, we infer that no loss error has happened. In this case, in order to counter the undesired accumulation of coherence between $|W_{0}\rangle = (|0\rangle+|4\rangle)/\sqrt{2}$ and $|\phi_{1}\rangle = (|0\rangle-|4\rangle)/\sqrt{2}$ (caused by the no-jump evolution), we need to apply the following corrective unitary rotation 
\begin{align}
U_{2} &= \cos(\gamma \Delta t) \big{(} |W_{0}\rangle\langle W_{0}|  + |\phi_{1}\rangle\langle\phi_{1}|  \big{)} 
\nonumber\\
&\quad + \sin(\gamma \Delta t) \big{(} |W_{0}\rangle\langle\phi_{1}| - |\phi_{1}\rangle\langle W_{0}|  \big{)} + U_{2}^{\mathrm{res}}, 
\end{align} 
where $\Delta t$ is the waiting time between the syndrome measurements and $U_{2}^{\mathrm{res}}$ is an arbitrary unitary operator on the subspace $\textrm{span}\lbrace |1\rangle,|3\rangle \rbrace$ (see Eqs. (16),(17) in \cite{BinomialCode}).

The corrective unitary operator $U_{1}$ corresponds to the corrective jump operator $F_{\mathrm{eng},1}$ in Eq.\eqref{eq:binomial code engineered dissipation}, and $U_{2}$ corresponds to the preventive jump operator $F_{\mathrm{eng},2}$. Notably, the corrective unitary operator $U_{2}$ should have the definite form above in the subspace $\textrm{span}\lbrace |0\rangle,|2\rangle,|4\rangle\rbrace$. (Especially, the rotation angle $\gamma \Delta t$ needs to be fine-tuned.) By contrast, for the preventive jump operator $F_{\mathrm{eng},2}$, we can choose any $|\Phi\rangle$ of order unity norm orthogonal to $|\phi_{1}\rangle$.


\section{Implementation of Large Engineered Dissipation} 
\subsection{Physical Requirements} 
We now discuss the practical feasibility of the $M\rightarrow \infty$ limit.
In principle, any desired Markovian engineered dissipation can be realized by modulating the Hamiltonian coupling between the system and fast decaying ancillary qubits \cite{Verstraete,Reiter2012,Zanardi2016}. 
Consider to this end the master equation
\begin{align}
\frac{d\rho_{T}(t)}{dt} = \sum_{i=1}^{L}\kappa_{i}D(|g_{i}\rangle\langle e_{i}| )\rho_{T}(t) -i[H_{\rm coupl},\rho_{T}(t)], \label{eq:master equation for engineered dissiaption}
\end{align}
with $H_{\rm coupl}=\sum_{i=1}^{L}\lambda_i  (F_{\textrm{eng},i}\otimes |e_{i}\rangle\langle g_{i}| +\textrm{h.c.}) $. Then, in the weak coupling and fast decay limit (i.e., $\lambda_i\ll \kappa_{i}$), the entire system approximately evolves as $\rho_{T}(t) = \rho(t)\otimes_{i=1}^{L}|g_{i}\rangle\langle g_{i}|$, and the system density matrix $\rho(t)$ obeys the desired Lindbladian master equation
\begin{equation}
\frac{\partial \rho}{\partial t} =  \sum_{i=1}^{L} \Big{(} \frac{ 4\lambda_i^{2}}{\kappa_{i} }\Big{)}D(F_{\mathrm{eng},i})\rho  \label{eq:master equation for the realized engineered dissiaption}
\end{equation}
(see Proposition 3 in \cite{Zanardi2016}; for the bound of deviation from the approximation, see Proposition 1 therein). The dimensionless engineered dissipation strength $M$ then scales as 
\begin{equation}
M=\mathcal{O}\Big{(} \frac{ \lambda_i^{2}}{\kappa_{i}\gamma} \Big{)}, 
\end{equation}
where $\gamma$ is the strength of intrinsic dissipation. Thus, the $M\rightarrow\infty$ limit can be achieved if $\sqrt{\kappa_{i}\gamma}\ll \lambda_i\ll \kappa_{i}$ for all $1\le i\le L$, which is feasible if $\gamma \ll \min_{1\le i\le L}\kappa_{i}$, i.e., when the ancillary qubits' decay rate is much larger than the intrinsic one of the system.

AutoQEC by engineered dissipation was experimentally realized in circuit QED systems \cite{Leghtas853,Touzard2017}. In both experiments, quantum information was 
encoded in the photon mode  of a microwave cavity, 
using the two-component cat states $|C^{\pm}_{\alpha}\rangle\propto |\alpha\rangle\pm|-\alpha\rangle$ as the logical qubit basis. Stabilization of the two-component cat code (in a high-Q cavity mode) was achieved by an engineered dissipation $D(a^{2}-\alpha^{2})$, which was implemented by coupling the high-Q cavity mode (decay rate $\gamma$) to a fast-decaying low-Q cavity mode (decay rate $\kappa$) via interaction $H_{\rm coupl} = \lambda ((a^{2}-\alpha^{2})b^{\dagger}+\textrm{h.c.})$. In the earlier experiment \cite{Leghtas853}, $\lambda \simeq$ 700\,kHz, $\kappa\simeq$ 40\,MHz, $\gamma\simeq$ 50\,kHz, and thus $M\simeq 1$ was achieved, and more recently $\lambda\simeq$ 900\,kHz, $\kappa\simeq$ 3\,MHz, $\gamma \simeq$ 10\,kHz, and thus $M\simeq 100$ was realized \cite{Touzard2017}.

We note that the realized two-component cat code is not robust against photon loss (a dominant error source in a high-Q cavity mode), since a single loss causes an irreversible logical bit-flip: $a|C^{\pm}_{\alpha}\rangle\simeq \alpha|C^{\mp}_{\alpha}\rangle$. There have been many proposals to leverage the capability of the cat code, either based on variations \cite{AutoCatTheory,Albert2018} or concatenation \cite{Cohen2017} of the two-component cat code, such that the logical information is protected against the photon loss errors. Below, we propose an alternative AutoQEC scheme, tailored to the same physical platform, based on the binomial code discussed above and briefly compare it with the cat code schemes.

\subsection{Implementation of binomial code AutoQEC} 
The engineered dissipation given in Eq.~\eqref{eq:binomial code engineered dissipation} fully protects the binomial code space against the photon loss error in the $M\rightarrow\infty$ limit (see Fig.~\ref{fig:Preventive needed}). In principle, such an engineered dissipation can be realized by coupling a high-Q cavity mode to two fast-decaying transmon qubits \cite{Koch2007,Schreier2008} or low-Q cavity modes via the interaction
\begin{align}
H_{1} &= \lambda_{1} \Big{(} \frac{1}{\sqrt{2}}(|0\rangle+|4\rangle)\langle 3| +|2\rangle\langle 1| \Big{)} \otimes |e_{1}\rangle\langle g_{1}| + \textrm{h.c.} ,
\nonumber\\
H_{2} &= \lambda_{2} ( |\Phi\rangle\langle 0|-|\Phi\rangle\langle 4| )\otimes |e_{2}\rangle\langle g_{2}| +\textrm{h.c.} , \label{eq:binomial code required Hamiltonian interaction}
\end{align}  
where $|g_{i}\rangle$ and $|e_{i}\rangle$ are the ground and excited states of the $i^{\textrm{th}}$ qubit (or low-Q cavity mode) for $i=1,2$ (cf.~Eqs. \eqref{eq:master equation for engineered dissiaption},\eqref{eq:master equation for the realized engineered dissiaption}), respectively. 

The ineraction $H_{1}$ requires selective two-quanta exchanges, $|4\rangle\langle 3|\otimes |e_{1}\rangle\langle g_{1}|$ and $|2\rangle\langle 1|\otimes |e_{1}\rangle\langle g_{1}|$, and a four-quanta exchange $|0\rangle\langle 3|\otimes |e_{1}\rangle\langle g_{1}|$. Also, $H_{2}$ requires selective three-quanta exchanges $|2\rangle\langle 0| \otimes |e_{2}\rangle\langle g_{2}|$ and $|2\rangle\langle 4| \otimes |e_{2}\rangle\langle g_{2}|$ if we choose $|\Phi\rangle = |2\rangle$. (Note that $|\Phi\rangle = |2\rangle$ allows the lowest order interaction, and other choices for $|\Phi\rangle$ lead us to at least four-quanta exchanges.) In comparison, variations of the cat code require an engineered four-photon dissipation $D(a^{4}-\alpha^{4})$ (four-component cat code \cite{AutoCatTheory}) or $D(a_{1}^{2}a_{2}^{2}-\gamma^{4})$ (two-mode cat code \cite{Albert2018}), hence at least a five-quanta exchange with an ancillary mode $b$, i.e., $a^{4}b^{\dagger}$ or $a_{1}^{2}a_{2}^{2}b^{\dagger}$. However, these interactions are not selective with regard to the photon number present, whereas the binomial code requires exactly these photon-number-selective quanta exchanges.

Generating higher (than second) order interaction between a cavity mode and a qubit (or another cavity mode) is a challenging task (especially at strong coupling), 
both in the photon number selective and non-selective cases. We however remark that photon-number {\em non-selective} three quanta exchange $a^{2}b^{\dagger}$ was realized in two previous experiments \cite{Leghtas853,Touzard2017}, and there is a concrete theoretical proposal for the photon number non-selective six-quanta exchange $a^{4}b^{\dagger 2}$ \cite{Mundhada2017}.

\section{Conclusion} 
In summary, 
we have provided a sufficient condition of the Knill-Laflamme type, 
under which the code space manifold can be autonomously protected against intrinsic errors in the limit of large engineered dissipation.  
We have constructed explicit engineered jump operators to achieve 
AutoQEC, and derived the temporal error bounds up to which the information initially 
stored in the code space can be preserved. We have briefly compared AutoQEC with measurement-based QEC, and clarified the difference between them as regards the protection against no-jump evolution. Based on our general theory
of AutoQEC, we have proposed an autonomous implementation of the binomial code, which may allow for significantly longer logical qubit lifetimes than those of the most stable uncorrected physical qubits.


\acknowledgments  
We gratefully acknowledge Julien M. E. Fra\"{\i}sse  for a critical reading and the corresponding comments.  
KN thanks  Liang Jiang, Steven M. Girvin, Michel Devoret, Steven T. Flammia, Victor V. Albert, Shruti Puri, Shantanu Mundhada, Connor Hann and Salvatore Elder for fruitful discussions.
The research of JML and URF 
was supported by the  National Research Foundation of Korea (NRF),
, Grant No. 2017R1A2A2A05001422. KN acknowledges support through the 
 Korea Foundation for Advanced Studies (KFAS). 

\appendix 




\section{Decomposition of dissipative evolution and noiseless subsystem symmetry}  \label{AppA}
Let the code space $\mc{C}=\textrm{span}\lbrace |W_{\mu}\rangle\rbrace_{\mu=0,\cdots,d-1}$ satisfy the Knill-Laflamme condition 
Eq.~\eqref{eq:k-l}. 


In the intrinsic dissipation operator, we identify two separate parts by using the projection operator 
$\Pi_{\mc{C}}$: 
\begin{equation}
F_k = F_k \Pi_{\mc{C}} + F_k(1-\Pi_{\mc{C}}) = f_k + \tilde{f}_{k}. \label{eq:Ff}
\end{equation}
The part $f_k =F_k \Pi_{\mc{C}}$ is the component of $F_k$ that is applied on the code subspace. 
We first show that, if the remaining term $\tilde{f}_{k}$ is absent, all $\mc{S}_{\mu}$ are collecting subspaces: For each $\mu\in\lbrace 0,\cdots,d-1 \rbrace$, an arbitrary initial state in $\mc{S}_{\mu} \subseteq \mathcal{H}$ never leaves $\mc{S}_{\mu}$ during the time evolution. This holds 
because for each Lindblad operator $\hat{F} \in \{ f_k, F_{\textrm{eng} ,i} \} \coloneqq \mathcal{F}$, the following general 
conditions hold (see Lemmas 9 and 11 of \cite{Baumgartner2008}) 
\begin{gather}
    \hat{F} \Pi_{\mc{S}_{\mu}} = \Pi_{\mc{S}_{\mu}} \hat{F} \Pi_{\mc{S}_{\mu}} , \label{eq:F-cond a}\\
    \Pi_{\mc{S}_{\mu}} \Big{(}  iH - 
    \sum_{\hat{F} \in \mathcal{F}} \hat{F}^\dagger \hat{F} \Big{)}
    (1-\Pi_{\mc{S}_{\mu}}) = 0 , \label{eq:F-cond b}
\end{gather}
where $\Pi_{\mc{S}_{\mu}}$ is the projection operator onto $\mc{S}_{\mu}$ and $H=0$ in our case since we set $H_{\textrm{eng}} = -H_{\textrm{sys}}$.  
It then follows that each $\mathcal{S}_{\mu}$ supports a unique steady state, $\rho_{\mu,\textrm{st}}$ for each $\mu\in \lbrace 0,\cdots,d-1\rbrace$.

In addition to these $d$ steady states, there exist $d^{2}-d$ stationary phase relations between all $\mathcal{S}_{\mu}$ and $\mathcal{S}_{\nu}$ with $\mu\neq \nu$: Note that the following unitary \textit{intertwiner} 
\begin{equation} \label{eq:unitary}
	U_{\mu}= |W_{\mu}\rangle \langle W_{0}| +\sum_{i=1}^{m-1} |W_{\mu};i\rangle \langle W_{0};i| + \textrm{h.c.} 
	+ \sum_{p}|\phi_p \rangle \langle \phi_p |, 
\end{equation} 
satisfies $U_{\mu}\Pi_{\mc{S}_{0}}=\Pi_{\mc{S}_{\mu}}U_{\mu}$, $U_{\mu}^{2}=\mathds{1}$. It commutes with all $\hat{F} \in \mathcal{F}$, due to $\mu$-independence of $\langle W_{\mu};i|F_{k}|W_{\mu}\rangle$.  Then, Proposition 16 of \cite{Baumgartner2008} implies that all steady states $\rho_{\mu,\textrm{st}}$ are unitarily connected by $U_{\mu}$, i.e., $\rho_{\mu,\textrm{st}} = U_{\mu}\rho_{0,\textrm{st}}U_{\mu}^{\dagger}$, and furthermore $U_{\mu}\rho_{0,\textrm{st}}U_{\nu}^{\dagger}$ establishes a stationary phase coherence between $\mathcal{S}_{\mu}$ and $\mathcal{S}_{\nu}$ for $\mu\neq\nu$ -- 
cf.~Fig.~\ref{fig:schematic}. Therefore, for any $d\times d$ qudit density matrix $\omega$, 
\begin{equation}
\Omega_{\textrm{st}}=\sum_{\mu,\nu=0}^{d-1}\omega_{\mu\nu}U_{\mu}\rho_{0,\textrm{st}}U^{\dagger}_{\nu} \leftrightarrow \omega\otimes \rho_{0,\mathrm{st}} \label{eq:steady states when the error is limited to the code subspace}
\end{equation}
are left invariant under the Lindbladian dynamics generated by the jumps in $\mathcal{F}$. Thus $\mathcal{S}_{\textrm{ccs}}=\cup_{\mu=0}^{d-1}\mathcal{S}_{\mu}$ forms a noiseless subsystem \cite{Zanardi1997,Lidar1998,Knill2000,Kempe2001,Lidar2012} if the intrinsic dissipation operator is limited to act on the code subspace. Note that states outside the {corrupted code space} are omitted in Fig.~\ref{fig:schematic} and Eq.~\eqref{eq:steady states when the error is limited to the code subspace}, since they are not occupied in the steady states.

\section{Long-time limit for large engineered dissipation} \label{AppB}
Given the noiseless subsystem symmetry derived in Appendix \ref{AppA}, 
it is natural to divide the Lindbladian $\mathcal{L}$ in (\ref{eq:master}) into three parts as follows 
\begin{eqnarray}
M\mc{L}_0[\rho] &=& M\sum_{i=1}^{L} D(F_{\textrm{eng},i})\rho ,\quad \mc{L}_1 [\rho] =\sum_{k=1}^{N} D(f_k)\rho, 
\nonumber\\
\mc{L}_2 [\rho] &=& \sum_{k=1}^{N}\Big{(} D(F_k)-D(f_k) \Big{)}. \label{decomp}
\end{eqnarray}
In the above, $\mc{L}_0$ is the engineered part while $\mc{L}_1$ is the intrinsic dissipation that follows the noiseless subsystem symmetry and leaves $\omega \otimes \rho_{\mathrm{st}}$ invariant: $\left(M\mc{L}_0+\mc{L}_1\right)[\omega \otimes \rho_{\mathrm{st}}]= 0$. Finally, $\mc{L}_2$ breaks this noiseless subsystem symmetry and brings about evolution for the otherwise steady state, $\mc{L}_2[\omega \otimes \rho_{\mathrm{st}}] \neq 0$. 
Following the decomposition in \eqref{decomp}, we define new Lindbladians, with properly chosen scaling factors 
\begin{equation}    \label{eq:lindbladian}
    \frac{1}{M}\mc{L} = \mc{L}_0 + \frac{1}{M}\mc{L}_1 + \frac{1}{M}\mc{L}_2 \coloneqq\mc{L}_{\textrm{e}} + \frac{1}{M}\mc{L}_2, 
\end{equation}
and derive the main result Eq.~\eqref{eq:thm} by  
proving an equivalent statement that a time evolution generated by the rescaled Lindbladian 
$\mc{L}/M$ leaves states in $\mc{C}$ approximately invariant in the same sense up to time $MT$.
We use the perturbation theory of Lindbladian superoperators \cite{kato1995perturb}, which has found applications so far  when the magnitude of a weakly unitary or dissipative operator is the small parameter of the perturbation series \cite{Zanardi2014, Zanardi2015, Zanardi2016, Macieszczak2016}. In our approach, on the other  hand, the inverse of the magnitude of strong engineered dissipation, $1/M$, is the small parameter.
Formally, we establish the following bound of the error $\epsilon(t)$ in the interval $0 \leq t \leq MT$, for states projected to the code subspace $\mc{C}$: 
\begin{eqnarray} \label{eq:bound_err}
    \epsilon(t) &\leq & \opnorm{e^{t \mc{L} /M}\mc{P}_{\mathcal{C}} - \mc{P}_{\mathcal{C}}} \\
    &\leq& \frac{T}{M} \left[ \mc{O}(\opnorm{\mc{L}_2}) + \mc{O}(\tau\opnorm{\mc{L}_2}^2) \right] + \frac{1}{M}\mc{O}(\tau \opnorm{\mc{L}_2}) \nonumber, 
\end{eqnarray}
where $\mathcal{P}_{\mathcal{C}}\coloneqq\lim_{t\rightarrow \infty} e^{t\mathcal{L}_{0}}$ is the superoperator projection to the code space $\mathcal{C}$ and $\tau=\mc{O}(M^0)$ is of order the inverse of the smallest real part of nonzero eigenvalue of ${\mc{L}_{\textrm{e}}}/M$, as discussed below. 
The above bound  \eqref{eq:bound_err}  
leads to our AutoQEC Theorem as represented by 
Eq.~\eqref{eq:thm}. 

We now proceed to prove the bound in the second line of Eq.~\eqref{eq:bound_err}. 
First, we denote the rescaled Lindbladian superoperator as $\overline{\mc{L}} = \mc{L}/M$
and, similarly, $\overline{\mc{L}_2}$ and $\overline{\mc{L}_\mathrm{e}}$. 
The error $\epsilon (t)$ as defined in Eq.~\eqref{eq:bound_err} satisfies the following inequalities: 
\begin{widetext}
\begin{eqnarray}    \label{eq:error}
    \epsilon(t) &=&\parallel e^{t\overline{\mc{L}}}\mc{P}_{\textrm{C}} - \mc{P}_{\textrm{C}} \parallel \nonumber \\
    &\leq& \parallel e^{t\mc{P}\overline{\mc{L}}\mc{P}}\mc{P}_{\textrm{e}} - \mc{P}_{\textrm{e}} \parallel + \parallel ( e^{t\overline{\mc{L}}} - e^{t\mc{P}\overline{\mc{L}}\mc{P}} ) \mc{P}_{\textrm{e}} \parallel 
+ \left( \opnorm{ e^{t\overline{\mc{L}}} } + 1 \right) \opnorm{\mc{P}_{\textrm{C}} - \mc{P}_e} \nonumber   \\
    &\leq& \parallel e^{t\mc{P}\overline{\mc{L}}\mc{P}}\mc{P}_{\textrm{e}} - \mc{P}_{\textrm{e}} \parallel 
    + (\opnorm{e^{t\overline{\mc{L}}}} + \opnorm{e^{t\mc{P}\overline{\mc{L}}\mc{P}}} ) \parallel \mc{P} - \mc{P}_{\textrm{e}}  \parallel 
    + \left( \opnorm{ e^{t\overline{\mc{L}}} } + 1 \right) \opnorm{\mc{P}_{\textrm{C}} - \mc{P}_e} .
\end{eqnarray}
\end{widetext}
The triangle inequalities $\opnorm{A+B} \leq \opnorm{A} + \opnorm{B}$ and $ ( e^{t\overline{\mc{L}}} - e^{t\mc{P}\overline{\mc{L}}\mc{P}} )\mc{P} = 0$ were used.

The kernel of $\mc{L}_{\textrm{e}}$ is $d^{2}$-dimensional because we have a $d$-dimensional noiseless subsystem. Let $\mc{P}$ be the direct sum of the projection onto eigenspaces of $\mc{L}$, which perturbatively originate from the kernel of the unperturbed superoperator $\mc{L}_{\textrm{e}}$. Then, from the perturbation theory of linear operators \cite{kato1995perturb}, 
\begin{widetext}
\begin{eqnarray}\label{eq:perturb_P}
    &\mc{P} - \mc{P}_{\textrm{e}} = -\left( \mc{P}_{\textrm{e}} \overline{\mc{L}_2} \mc{S} - \mc{S} \overline{\mc{L}_2} \mc{P}_{\textrm{e}} \right) + \mc{O}(\parallel \overline{\mc{L}_2} \parallel ^2), \\ \label{eq:perturb_L}
    & \mc{P}\overline{\mc{L}}\mc{P} = \left( \mc{P}_{\textrm{e}}\overline{\mc{L}_2}\mc{P}_{\textrm{e}} \right) 
    -  \left( \mc{P}_{\textrm{e}}\overline{\mc{L}_2}\mc{S}\overline{\mc{L}_2}\mc{P}_{\textrm{e}} 
    - \mc{P}_{\textrm{e}}\overline{\mc{L}_2}\mc{P}_{\textrm{e}}\overline{\mc{L}_2}\mc{S}
    - \mc{S}\overline{\mc{L}_2}\mc{P}_{\textrm{e}}\overline{\mc{L}_2}\mc{P}_{\textrm{e}} \right) 
    + \mc{O}(\tau^2 \opnorm{\overline{\mc{L}_2}}^3)
\end{eqnarray}     
\end{widetext} 
holds. Here, $\mc{S}$ is the pseudo-inverse of $\overline{\mc{L}_{\textrm{e}}}$, which satisfies $\mc{S}\overline{\mc{L}_{\textrm{e}}} = \overline{\mc{L}_{\textrm{e}}}\mc{S} = 1-\mc{P}_{\textrm{e}}$. 
Then, $\tau = \opnorm{\mc{S}}$ is of the order of the inverse of the smallest nonzero eigenvalue of $\overline{\mc{L}_{\textrm{e}}}$ and 
is of order $\mc{O}(M^0)$. This is proven as follows: 
$\mc{P}_{\textrm{C}},~\mc{P}_{\textrm{e}}$ denote projection superoperators that project the linear operators of the Hilbert space $\mc{H}$ onto the kernel of the superoperators $\mc{L}_0,~\mc{L}_{\textrm{e}}$, respectively. For example, $\mc{P}_C$ projects operators onto the kernel $\textrm{ker}\, \mc{L}_0 = \{ \rho : \mc{L}_0[\rho] = 0 \}$, which is indeed the code subspace. Since $\cup_{\mu=0}^{d-1}\mathcal{S}_{\mu}$ forms a noiseless subsystem of $\mc{L}_{\textrm{e}}$, Appendix \ref{AppA}, $\textrm{ker}\, \mc{L}_{\textrm{e}}$ is a projection onto the subspace composed by metastable states, $\omega \otimes \rho_{\textrm{st}}$.
Using the perturbation theory of linear operators viewing $\mc{L}_1/M$ as a perturbation to the  Lindbladian $\mc{L}_{\textrm{e}} = \mc{L}_0 + \mc{L}_1/M$, one finds \cite{kato1995perturb}
\begin{equation} \label{eq:sup1:proj}
\mc{P}_{\textrm{e}} - \mc{P}_C = -\frac{1}{M} \left( \mc{P}_C \mc{L}_1 \mc{S}_0 - \mc{S}_0 \mc{L}_1 \mc{P}_C \right) + \frac{1}{M^2} \mc{O}(\parallel \mc{L}_1 \parallel ^2),
\end{equation}
where $\mc{S}_0$ denotes the pseudo-inverse of $\mc{L}_0$, which satisfies $\mc{S}_0\mc{L}_0 = \mc{L}_0\mc{S}_0 = 1-\mc{P}_C$.
This implies $\opnorm{ \mc{P}_{C} - \mc{P}_{\textrm{e}} } = \mc{O}(1/M)$.

Now, by the definition of \unexpanded{$\widetilde{f_k}$: $\widetilde{f_k}\Pi_\mc{C} = 0$}, it follows that $\mc{P}_{C}\mc{L}_2 \mc{P}_{C}[\rho] = \Pi_\mc{C}\mc{L}_2 [\Pi_\mc{C} \rho \Pi_\mc{C}]\Pi_\mc{C} = 0$. This entails $\opnorm{\mc{P}_{C}\mc{L}_2 \mc{P}_{C}} = 0$.
Putting the results together using the triangle inequality $\opnorm{A+B} \leq \opnorm{A} + \opnorm{B}$, we get
\begin{eqnarray} 
\opnorm{\mc{P}_{\textrm{e}}\mc{L}_2 \mc{P}_{\textrm{e}}}
&\leq& \opnorm{ \left( \mc{P}_{\textrm{e}} - \mc{P}_C \right) \mc{L}_2 \mc{P}_{\textrm{e}}} \nn 
& & \hspace*{-6em} + \opnorm{ \mc{P}_{\textrm{C}} \mc{L}_2 \left( \mc{P}_{\textrm{e}} - \mc{P}_{\textrm{C}} \right) }+ \opnorm{\mc{P}_{C}\mc{L}_2 \mc{P}_{C}} ~=~\mc{O}(\frac{1}{M}). \label{eq:sup1:PeL2Pe}
\end{eqnarray}

From Eqs.~(\ref{eq:sup1:PeL2Pe}),(\ref{eq:perturb_P}),(\ref{eq:perturb_L}), we obtain two scaling equations for the norm of the operators therein,  
\begin{eqnarray}    \label{eq:bound_proj}
    \parallel \mc{P} - \mc{P}_{\textrm{e}} \parallel  &=&\mc{O}(\tau \opnorm{\overline{\mc{L}_2}}  ), \\
  \label{eq:bound_op}
	\opnorm{\mc{P}\overline{\mc{L}}\mc{P}} &=& \frac{1}{M}\mc{O}(\opnorm{\overline{\mc{L}_2}}) + \mc{O}(\tau\opnorm{\overline{\mc{L}_2}}^2).
\end{eqnarray}
Since $\mc{L}$ by definition generates a completely positive trace preserving evolution, $\parallel \!\!e^{t\overline{\mc{L}}} \!\! \parallel \leq 1$. Furthermore, because time is bounded by $t \leq MT$, we have $t \opnorm{\mc{P}\overline{\mc{L}}\mc{P}} \leq \frac{T}{M} \left( \mc{O}(\opnorm{\mc{L}_2}) + \mc{O}(\tau\opnorm{\mc{L}_2}^2) \right)$. Note that here the overline has been removed. Then, $t\opnorm{\mc{P}\overline{\mc{L}}\mc{P}}$ vanishes for sufficiently large $M$, hence $\opnorm{e^{t\mc{P}\overline{\mc{L}}\mc{P}}} = \mc{O}(1)$.
Using this bound and $\opnorm{e^X - 1} \leq \opnorm{X} \opnorm{e^X}$ with $X = t\mc{P}\overline{\mc{L}}\mc{P}$, it follows that 
\begin{equation}    \label{eq:bound_expop}
    \opnorm{e^{t\mc{P}\overline{\mc{L}}\mc{P}}\mc{P}_e - \mc{P}_e} \leq \frac{T}{M} \left( \mc{O}(\opnorm{\mc{L}_2}) + \mc{O}(\tau\opnorm{\mc{L}_2}^2) \right).
\end{equation}
Substituting Eqs.~\eqref{eq:bound_proj}, \eqref{eq:bound_op}, \eqref{eq:bound_expop} into 
Eq.~(\ref{eq:error}), the bound on the error reads as follows:
\begin{equation}    \label{eq:supp_bound_err}
    \epsilon(t) \leq \frac{T}{M} \left[ \mc{O}(\opnorm{\mc{L}_2}) + \mc{O}(\tau\opnorm{\mc{L}_2}^2) \right] + \frac{1}{M}\mc{O}(\tau \opnorm{\mc{L}_2}).
\end{equation}
This completes our proof of Eq.~\eqref{eq:bound_err}.

\bibliography{jaemo36}

\begin{thebibliography}{57}%
\makeatletter
\providecommand \@ifxundefined [1]{%
 \@ifx{#1\undefined}
}%
\providecommand \@ifnum [1]{%
 \ifnum #1\expandafter \@firstoftwo
 \else \expandafter \@secondoftwo
 \fi
}%
\providecommand \@ifx [1]{%
 \ifx #1\expandafter \@firstoftwo
 \else \expandafter \@secondoftwo
 \fi
}%
\providecommand \natexlab [1]{#1}%
\providecommand \enquote  [1]{``#1''}%
\providecommand \bibnamefont  [1]{#1}%
\providecommand \bibfnamefont [1]{#1}%
\providecommand \citenamefont [1]{#1}%
\providecommand \href@noop [0]{\@secondoftwo}%
\providecommand \href [0]{\begingroup \@sanitize@url \@href}%
\providecommand \@href[1]{\@@startlink{#1}\@@href}%
\providecommand \@@href[1]{\endgroup#1\@@endlink}%
\providecommand \@sanitize@url [0]{\catcode `\\12\catcode `\$12\catcode
  `\&12\catcode `\#12\catcode `\^12\catcode `\_12\catcode `\%12\relax}%
\providecommand \@@startlink[1]{}%
\providecommand \@@endlink[0]{}%
\providecommand \url  [0]{\begingroup\@sanitize@url \@url }%
\providecommand \@url [1]{\endgroup\@href {#1}{\urlprefix }}%
\providecommand \urlprefix  [0]{URL }%
\providecommand \Eprint [0]{\href }%
\providecommand \doibase [0]{http://dx.doi.org/}%
\providecommand \selectlanguage [0]{\@gobble}%
\providecommand \bibinfo  [0]{\@secondoftwo}%
\providecommand \bibfield  [0]{\@secondoftwo}%
\providecommand \translation [1]{[#1]}%
\providecommand \BibitemOpen [0]{}%
\providecommand \bibitemStop [0]{}%
\providecommand \bibitemNoStop [0]{.\EOS\space}%
\providecommand \EOS [0]{\spacefactor3000\relax}%
\providecommand \BibitemShut  [1]{\csname bibitem#1\endcsname}%
\let\auto@bib@innerbib\@empty
\bibitem [{\citenamefont {Breuer}\ and\ \citenamefont
  {Petruccione}(2002)}]{breuer2002theory}%
  \BibitemOpen
  \bibfield  {author} {\bibinfo {author} {\bibfnamefont {H.~P.}\ \bibnamefont
  {Breuer}}\ and\ \bibinfo {author} {\bibfnamefont {F.}~\bibnamefont
  {Petruccione}},\ }\href {https://books.google.co.kr/books?id=w2UOnwEACAAJ}
  {\emph {\bibinfo {title} {{The Theory of Open Quantum Systems}}}}\ (\bibinfo
  {publisher} {Oxford University Press},\ \bibinfo {year} {2002})\BibitemShut
  {NoStop}%
\bibitem [{\citenamefont {Beige}\ \emph {et~al.}(2000)\citenamefont {Beige},
  \citenamefont {Braun}, \citenamefont {Tregenna},\ and\ \citenamefont
  {Knight}}]{Braun2000}%
  \BibitemOpen
  \bibfield  {author} {\bibinfo {author} {\bibfnamefont {Almut}\ \bibnamefont
  {Beige}}, \bibinfo {author} {\bibfnamefont {Daniel}\ \bibnamefont {Braun}},
  \bibinfo {author} {\bibfnamefont {Ben}\ \bibnamefont {Tregenna}}, \ and\
  \bibinfo {author} {\bibfnamefont {Peter~L.}\ \bibnamefont {Knight}},\
  }\bibfield  {title} {\enquote {\bibinfo {title} {{Quantum Computing Using
  Dissipation to Remain in a Decoherence-Free Subspace}},}\ }\href {\doibase
  10.1103/PhysRevLett.85.1762} {\bibfield  {journal} {\bibinfo  {journal}
  {Phys. Rev. Lett.}\ }\textbf {\bibinfo {volume} {85}},\ \bibinfo {pages}
  {1762--1765} (\bibinfo {year} {2000})}\BibitemShut {NoStop}%
\bibitem [{\citenamefont {Verstraete}\ \emph {et~al.}(2009)\citenamefont
  {Verstraete}, \citenamefont {Wolf},\ and\ \citenamefont
  {Ignacio~Cirac}}]{Verstraete}%
  \BibitemOpen
  \bibfield  {author} {\bibinfo {author} {\bibfnamefont {Frank}\ \bibnamefont
  {Verstraete}}, \bibinfo {author} {\bibfnamefont {Michael~M.}\ \bibnamefont
  {Wolf}}, \ and\ \bibinfo {author} {\bibfnamefont {J.}~\bibnamefont
  {Ignacio~Cirac}},\ }\bibfield  {title} {\enquote {\bibinfo {title} {{Quantum
  computation and quantum-state engineering driven by dissipation}},}\ }\href
  {http://dx.doi.org/10.1038/nphys1342} {\bibfield  {journal} {\bibinfo
  {journal} {Nat. Phys.}\ }\textbf {\bibinfo {volume} {5}},\ \bibinfo {pages}
  {633} (\bibinfo {year} {2009})}\BibitemShut {NoStop}%
\bibitem [{\citenamefont {Zanardi}\ and\ \citenamefont
  {Campos~Venuti}(2014)}]{Zanardi2014}%
  \BibitemOpen
  \bibfield  {author} {\bibinfo {author} {\bibfnamefont {Paolo}\ \bibnamefont
  {Zanardi}}\ and\ \bibinfo {author} {\bibfnamefont {Lorenzo}\ \bibnamefont
  {Campos~Venuti}},\ }\bibfield  {title} {\enquote {\bibinfo {title} {{Coherent
  Quantum Dynamics in Steady-State Manifolds of Strongly Dissipative
  Systems}},}\ }\href {\doibase 10.1103/PhysRevLett.113.240406} {\bibfield
  {journal} {\bibinfo  {journal} {Phys. Rev. Lett.}\ }\textbf {\bibinfo
  {volume} {113}},\ \bibinfo {pages} {240406} (\bibinfo {year}
  {2014})}\BibitemShut {NoStop}%
\bibitem [{\citenamefont {Marshall}\ \emph {et~al.}(2016)\citenamefont
  {Marshall}, \citenamefont {Campos~Venuti},\ and\ \citenamefont
  {Zanardi}}]{Marshall2016}%
  \BibitemOpen
  \bibfield  {author} {\bibinfo {author} {\bibfnamefont {Jeffrey}\ \bibnamefont
  {Marshall}}, \bibinfo {author} {\bibfnamefont {Lorenzo}\ \bibnamefont
  {Campos~Venuti}}, \ and\ \bibinfo {author} {\bibfnamefont {Paolo}\
  \bibnamefont {Zanardi}},\ }\bibfield  {title} {\enquote {\bibinfo {title}
  {Modular quantum-information processing by dissipation},}\ }\href {\doibase
  10.1103/PhysRevA.94.052339} {\bibfield  {journal} {\bibinfo  {journal} {Phys.
  Rev. A}\ }\textbf {\bibinfo {volume} {94}},\ \bibinfo {pages} {052339}
  (\bibinfo {year} {2016})}\BibitemShut {NoStop}%
\bibitem [{\citenamefont {Kraus}\ \emph {et~al.}(2008)\citenamefont {Kraus},
  \citenamefont {B\"uchler}, \citenamefont {Diehl}, \citenamefont {Kantian},
  \citenamefont {Micheli},\ and\ \citenamefont {Zoller}}]{Kraus2008}%
  \BibitemOpen
  \bibfield  {author} {\bibinfo {author} {\bibfnamefont {B.}~\bibnamefont
  {Kraus}}, \bibinfo {author} {\bibfnamefont {H.~P.}\ \bibnamefont
  {B\"uchler}}, \bibinfo {author} {\bibfnamefont {S.}~\bibnamefont {Diehl}},
  \bibinfo {author} {\bibfnamefont {A.}~\bibnamefont {Kantian}}, \bibinfo
  {author} {\bibfnamefont {A.}~\bibnamefont {Micheli}}, \ and\ \bibinfo
  {author} {\bibfnamefont {P.}~\bibnamefont {Zoller}},\ }\bibfield  {title}
  {\enquote {\bibinfo {title} {{Preparation of entangled states by quantum
  Markov processes}},}\ }\href {\doibase 10.1103/PhysRevA.78.042307} {\bibfield
   {journal} {\bibinfo  {journal} {Phys. Rev. A}\ }\textbf {\bibinfo {volume}
  {78}},\ \bibinfo {pages} {042307} (\bibinfo {year} {2008})}\BibitemShut
  {NoStop}%
\bibitem [{\citenamefont {Krauter}\ \emph {et~al.}(2011)\citenamefont
  {Krauter}, \citenamefont {Muschik}, \citenamefont {Jensen}, \citenamefont
  {Wasilewski}, \citenamefont {Petersen}, \citenamefont {Cirac},\ and\
  \citenamefont {Polzik}}]{Krauter2011}%
  \BibitemOpen
  \bibfield  {author} {\bibinfo {author} {\bibfnamefont {Hanna}\ \bibnamefont
  {Krauter}}, \bibinfo {author} {\bibfnamefont {Christine~A.}\ \bibnamefont
  {Muschik}}, \bibinfo {author} {\bibfnamefont {Kasper}\ \bibnamefont
  {Jensen}}, \bibinfo {author} {\bibfnamefont {Wojciech}\ \bibnamefont
  {Wasilewski}}, \bibinfo {author} {\bibfnamefont {Jonas~M.}\ \bibnamefont
  {Petersen}}, \bibinfo {author} {\bibfnamefont {J.~Ignacio}\ \bibnamefont
  {Cirac}}, \ and\ \bibinfo {author} {\bibfnamefont {Eugene~S.}\ \bibnamefont
  {Polzik}},\ }\bibfield  {title} {\enquote {\bibinfo {title} {{Entanglement
  Generated by Dissipation and Steady State Entanglement of Two Macroscopic
  Objects}},}\ }\href {\doibase 10.1103/PhysRevLett.107.080503} {\bibfield
  {journal} {\bibinfo  {journal} {Phys. Rev. Lett.}\ }\textbf {\bibinfo
  {volume} {107}},\ \bibinfo {pages} {080503} (\bibinfo {year}
  {2011})}\BibitemShut {NoStop}%
\bibitem [{\citenamefont {{Lin}}\ \emph {et~al.}(2013)\citenamefont {{Lin}},
  \citenamefont {{Gaebler}}, \citenamefont {{Reiter}}, \citenamefont {{Tan}},
  \citenamefont {{Bowler}}, \citenamefont {{S{\o}rensen}}, \citenamefont
  {{Leibfried}},\ and\ \citenamefont {{Wineland}}}]{Lin2013}%
  \BibitemOpen
  \bibfield  {author} {\bibinfo {author} {\bibfnamefont {Y.}~\bibnamefont
  {{Lin}}}, \bibinfo {author} {\bibfnamefont {J.~P.}\ \bibnamefont
  {{Gaebler}}}, \bibinfo {author} {\bibfnamefont {F.}~\bibnamefont {{Reiter}}},
  \bibinfo {author} {\bibfnamefont {T.~R.}\ \bibnamefont {{Tan}}}, \bibinfo
  {author} {\bibfnamefont {R.}~\bibnamefont {{Bowler}}}, \bibinfo {author}
  {\bibfnamefont {A.~S.}\ \bibnamefont {{S{\o}rensen}}}, \bibinfo {author}
  {\bibfnamefont {D.}~\bibnamefont {{Leibfried}}}, \ and\ \bibinfo {author}
  {\bibfnamefont {D.~J.}\ \bibnamefont {{Wineland}}},\ }\bibfield  {title}
  {\enquote {\bibinfo {title} {{Dissipative production of a maximally entangled
  steady state of two quantum bits}},}\ }\href {\doibase 10.1038/nature12801}
  {\bibfield  {journal} {\bibinfo  {journal} {\nat}\ }\textbf {\bibinfo
  {volume} {504}},\ \bibinfo {pages} {415--418} (\bibinfo {year}
  {2013})}\BibitemShut {NoStop}%
\bibitem [{\citenamefont {{Shankar}}\ \emph {et~al.}(2013)\citenamefont
  {{Shankar}}, \citenamefont {{Hatridge}}, \citenamefont {{Leghtas}},
  \citenamefont {{Sliwa}}, \citenamefont {{Narla}}, \citenamefont {{Vool}},
  \citenamefont {{Girvin}}, \citenamefont {{Frunzio}}, \citenamefont
  {{Mirrahimi}},\ and\ \citenamefont {{Devoret}}}]{Shankar2013}%
  \BibitemOpen
  \bibfield  {author} {\bibinfo {author} {\bibfnamefont {S.}~\bibnamefont
  {{Shankar}}}, \bibinfo {author} {\bibfnamefont {M.}~\bibnamefont
  {{Hatridge}}}, \bibinfo {author} {\bibfnamefont {Z.}~\bibnamefont
  {{Leghtas}}}, \bibinfo {author} {\bibfnamefont {K.~M.}\ \bibnamefont
  {{Sliwa}}}, \bibinfo {author} {\bibfnamefont {A.}~\bibnamefont {{Narla}}},
  \bibinfo {author} {\bibfnamefont {U.}~\bibnamefont {{Vool}}}, \bibinfo
  {author} {\bibfnamefont {S.~M.}\ \bibnamefont {{Girvin}}}, \bibinfo {author}
  {\bibfnamefont {L.}~\bibnamefont {{Frunzio}}}, \bibinfo {author}
  {\bibfnamefont {M.}~\bibnamefont {{Mirrahimi}}}, \ and\ \bibinfo {author}
  {\bibfnamefont {M.~H.}\ \bibnamefont {{Devoret}}},\ }\bibfield  {title}
  {\enquote {\bibinfo {title} {{Autonomously stabilized entanglement between
  two superconducting quantum bits}},}\ }\href {\doibase 10.1038/nature12802}
  {\bibfield  {journal} {\bibinfo  {journal} {\nat}\ }\textbf {\bibinfo
  {volume} {504}},\ \bibinfo {pages} {419--422} (\bibinfo {year}
  {2013})}\BibitemShut {NoStop}%
\bibitem [{\citenamefont {Diehl}\ \emph {et~al.}(2008)\citenamefont {Diehl},
  \citenamefont {Micheli}, \citenamefont {Kantian}, \citenamefont {Kraus},
  \citenamefont {B\"uchler},\ and\ \citenamefont {Zoller}}]{Diehl}%
  \BibitemOpen
  \bibfield  {author} {\bibinfo {author} {\bibfnamefont {S.}~\bibnamefont
  {Diehl}}, \bibinfo {author} {\bibfnamefont {A.}~\bibnamefont {Micheli}},
  \bibinfo {author} {\bibfnamefont {A.}~\bibnamefont {Kantian}}, \bibinfo
  {author} {\bibfnamefont {B.}~\bibnamefont {Kraus}}, \bibinfo {author}
  {\bibfnamefont {H.~P.}\ \bibnamefont {B\"uchler}}, \ and\ \bibinfo {author}
  {\bibfnamefont {P.}~\bibnamefont {Zoller}},\ }\bibfield  {title} {\enquote
  {\bibinfo {title} {Quantum states and phases in driven open quantum systems
  with cold atoms},}\ }\href {http://dx.doi.org/10.1038/nphys1073} {\bibfield
  {journal} {\bibinfo  {journal} {Nat. Phys.}\ }\textbf {\bibinfo {volume}
  {4}},\ \bibinfo {pages} {878--883} (\bibinfo {year} {2008})}\BibitemShut
  {NoStop}%
\bibitem [{\citenamefont {Diehl}\ \emph {et~al.}(2011)\citenamefont {Diehl},
  \citenamefont {Rico}, \citenamefont {Baranov},\ and\ \citenamefont
  {Zoller}}]{Diehl2011}%
  \BibitemOpen
  \bibfield  {author} {\bibinfo {author} {\bibfnamefont {Sebastian}\
  \bibnamefont {Diehl}}, \bibinfo {author} {\bibfnamefont {Enrique}\
  \bibnamefont {Rico}}, \bibinfo {author} {\bibfnamefont {Mikhail~A.}\
  \bibnamefont {Baranov}}, \ and\ \bibinfo {author} {\bibfnamefont {Peter}\
  \bibnamefont {Zoller}},\ }\bibfield  {title} {\enquote {\bibinfo {title}
  {Topology by dissipation in atomic quantum wires},}\ }\href
  {http://dx.doi.org/10.1038/nphys2106} {\bibfield  {journal} {\bibinfo
  {journal} {Nat. Phys.}\ }\textbf {\bibinfo {volume} {7}},\ \bibinfo {pages}
  {971} (\bibinfo {year} {2011})}\BibitemShut {NoStop}%
\bibitem [{\citenamefont {Unruh}(1995)}]{Unruh}%
  \BibitemOpen
  \bibfield  {author} {\bibinfo {author} {\bibfnamefont {W.~G.}\ \bibnamefont
  {Unruh}},\ }\bibfield  {title} {\enquote {\bibinfo {title} {Maintaining
  coherence in quantum computers},}\ }\href {\doibase 10.1103/PhysRevA.51.992}
  {\bibfield  {journal} {\bibinfo  {journal} {Phys. Rev. A}\ }\textbf {\bibinfo
  {volume} {51}},\ \bibinfo {pages} {992--997} (\bibinfo {year}
  {1995})}\BibitemShut {NoStop}%
\bibitem [{\citenamefont {Shor}(1995)}]{Shor1995}%
  \BibitemOpen
  \bibfield  {author} {\bibinfo {author} {\bibfnamefont {Peter~W.}\
  \bibnamefont {Shor}},\ }\bibfield  {title} {\enquote {\bibinfo {title}
  {Scheme for reducing decoherence in quantum computer memory},}\ }\href
  {\doibase 10.1103/PhysRevA.52.R2493} {\bibfield  {journal} {\bibinfo
  {journal} {Phys. Rev. A}\ }\textbf {\bibinfo {volume} {52}},\ \bibinfo
  {pages} {R2493--R2496} (\bibinfo {year} {1995})}\BibitemShut {NoStop}%
\bibitem [{\citenamefont {Cory}\ \emph {et~al.}(1998)\citenamefont {Cory},
  \citenamefont {Price}, \citenamefont {Maas}, \citenamefont {Knill},
  \citenamefont {Laflamme}, \citenamefont {Zurek}, \citenamefont {Havel},\ and\
  \citenamefont {Somaroo}}]{Cory1998}%
  \BibitemOpen
  \bibfield  {author} {\bibinfo {author} {\bibfnamefont {D.~G.}\ \bibnamefont
  {Cory}}, \bibinfo {author} {\bibfnamefont {M.~D.}\ \bibnamefont {Price}},
  \bibinfo {author} {\bibfnamefont {W.}~\bibnamefont {Maas}}, \bibinfo {author}
  {\bibfnamefont {E.}~\bibnamefont {Knill}}, \bibinfo {author} {\bibfnamefont
  {R.}~\bibnamefont {Laflamme}}, \bibinfo {author} {\bibfnamefont {W.~H.}\
  \bibnamefont {Zurek}}, \bibinfo {author} {\bibfnamefont {T.~F.}\ \bibnamefont
  {Havel}}, \ and\ \bibinfo {author} {\bibfnamefont {S.~S.}\ \bibnamefont
  {Somaroo}},\ }\bibfield  {title} {\enquote {\bibinfo {title} {{Experimental
  Quantum Error Correction}},}\ }\href {\doibase 10.1103/PhysRevLett.81.2152}
  {\bibfield  {journal} {\bibinfo  {journal} {Phys. Rev. Lett.}\ }\textbf
  {\bibinfo {volume} {81}},\ \bibinfo {pages} {2152--2155} (\bibinfo {year}
  {1998})}\BibitemShut {NoStop}%
\bibitem [{\citenamefont {Chiaverini}\ \emph {et~al.}(2004)\citenamefont
  {Chiaverini}, \citenamefont {Leibfried}, \citenamefont {Schaetz},
  \citenamefont {Barrett}, \citenamefont {Blakestad}, \citenamefont {Britton},
  \citenamefont {Itano}, \citenamefont {Jost}, \citenamefont {Knill},
  \citenamefont {Langer}, \citenamefont {Ozeri},\ and\ \citenamefont
  {Wineland}}]{Chiaverini}%
  \BibitemOpen
  \bibfield  {author} {\bibinfo {author} {\bibfnamefont {J.}~\bibnamefont
  {Chiaverini}}, \bibinfo {author} {\bibfnamefont {D.}~\bibnamefont
  {Leibfried}}, \bibinfo {author} {\bibfnamefont {T.}~\bibnamefont {Schaetz}},
  \bibinfo {author} {\bibfnamefont {M.~D.}\ \bibnamefont {Barrett}}, \bibinfo
  {author} {\bibfnamefont {R.~B.}\ \bibnamefont {Blakestad}}, \bibinfo {author}
  {\bibfnamefont {J.}~\bibnamefont {Britton}}, \bibinfo {author} {\bibfnamefont
  {W.~M.}\ \bibnamefont {Itano}}, \bibinfo {author} {\bibfnamefont {J.~D.}\
  \bibnamefont {Jost}}, \bibinfo {author} {\bibfnamefont {E.}~\bibnamefont
  {Knill}}, \bibinfo {author} {\bibfnamefont {C.}~\bibnamefont {Langer}},
  \bibinfo {author} {\bibfnamefont {R.}~\bibnamefont {Ozeri}}, \ and\ \bibinfo
  {author} {\bibfnamefont {D.~J.}\ \bibnamefont {Wineland}},\ }\bibfield
  {title} {\enquote {\bibinfo {title} {Realization of quantum error
  correction},}\ }\href {http://dx.doi.org/10.1038/nature03074} {\bibfield
  {journal} {\bibinfo  {journal} {Nature}\ }\textbf {\bibinfo {volume} {432}},\
  \bibinfo {pages} {602} (\bibinfo {year} {2004})}\BibitemShut {NoStop}%
\bibitem [{\citenamefont {Schindler}\ \emph {et~al.}(2011)\citenamefont
  {Schindler}, \citenamefont {Barreiro}, \citenamefont {Monz}, \citenamefont
  {Nebendahl}, \citenamefont {Nigg}, \citenamefont {Chwalla}, \citenamefont
  {Hennrich},\ and\ \citenamefont {Blatt}}]{Schindler2011}%
  \BibitemOpen
  \bibfield  {author} {\bibinfo {author} {\bibfnamefont {Philipp}\ \bibnamefont
  {Schindler}}, \bibinfo {author} {\bibfnamefont {Julio~T.}\ \bibnamefont
  {Barreiro}}, \bibinfo {author} {\bibfnamefont {Thomas}\ \bibnamefont {Monz}},
  \bibinfo {author} {\bibfnamefont {Volckmar}\ \bibnamefont {Nebendahl}},
  \bibinfo {author} {\bibfnamefont {Daniel}\ \bibnamefont {Nigg}}, \bibinfo
  {author} {\bibfnamefont {Michael}\ \bibnamefont {Chwalla}}, \bibinfo {author}
  {\bibfnamefont {Markus}\ \bibnamefont {Hennrich}}, \ and\ \bibinfo {author}
  {\bibfnamefont {Rainer}\ \bibnamefont {Blatt}},\ }\bibfield  {title}
  {\enquote {\bibinfo {title} {{Experimental Repetitive Quantum Error
  Correction}},}\ }\href {\doibase 10.1126/science.1203329} {\bibfield
  {journal} {\bibinfo  {journal} {Science}\ }\textbf {\bibinfo {volume}
  {332}},\ \bibinfo {pages} {1059--1061} (\bibinfo {year} {2011})}\BibitemShut
  {NoStop}%
\bibitem [{\citenamefont {Kelly}\ \emph {et~al.}(2015)\citenamefont {Kelly},
  \citenamefont {Barends}, \citenamefont {Fowler}, \citenamefont {Megrant},
  \citenamefont {Jeffrey}, \citenamefont {White}, \citenamefont {Sank},
  \citenamefont {Mutus}, \citenamefont {Campbell}, \citenamefont {Chen},
  \citenamefont {Chen}, \citenamefont {Chiaro}, \citenamefont {Dunsworth},
  \citenamefont {Hoi}, \citenamefont {Neill}, \citenamefont {O'Malley},
  \citenamefont {Quintana}, \citenamefont {Roushan}, \citenamefont
  {Vainsencher}, \citenamefont {Wenner}, \citenamefont {Cleland},\ and\
  \citenamefont {Martinis}}]{Kelly}%
  \BibitemOpen
  \bibfield  {author} {\bibinfo {author} {\bibfnamefont {J.}~\bibnamefont
  {Kelly}}, \bibinfo {author} {\bibfnamefont {R.}~\bibnamefont {Barends}},
  \bibinfo {author} {\bibfnamefont {A.~G.}\ \bibnamefont {Fowler}}, \bibinfo
  {author} {\bibfnamefont {A.}~\bibnamefont {Megrant}}, \bibinfo {author}
  {\bibfnamefont {E.}~\bibnamefont {Jeffrey}}, \bibinfo {author} {\bibfnamefont
  {T.~C.}\ \bibnamefont {White}}, \bibinfo {author} {\bibfnamefont
  {D.}~\bibnamefont {Sank}}, \bibinfo {author} {\bibfnamefont {J.~Y.}\
  \bibnamefont {Mutus}}, \bibinfo {author} {\bibfnamefont {B.}~\bibnamefont
  {Campbell}}, \bibinfo {author} {\bibfnamefont {Yu}~\bibnamefont {Chen}},
  \bibinfo {author} {\bibfnamefont {Z.}~\bibnamefont {Chen}}, \bibinfo {author}
  {\bibfnamefont {B.}~\bibnamefont {Chiaro}}, \bibinfo {author} {\bibfnamefont
  {A.}~\bibnamefont {Dunsworth}}, \bibinfo {author} {\bibfnamefont {I.~C.}\
  \bibnamefont {Hoi}}, \bibinfo {author} {\bibfnamefont {C.}~\bibnamefont
  {Neill}}, \bibinfo {author} {\bibfnamefont {P.~J.~J.}\ \bibnamefont
  {O'Malley}}, \bibinfo {author} {\bibfnamefont {C.}~\bibnamefont {Quintana}},
  \bibinfo {author} {\bibfnamefont {P.}~\bibnamefont {Roushan}}, \bibinfo
  {author} {\bibfnamefont {A.}~\bibnamefont {Vainsencher}}, \bibinfo {author}
  {\bibfnamefont {J.}~\bibnamefont {Wenner}}, \bibinfo {author} {\bibfnamefont
  {A.~N.}\ \bibnamefont {Cleland}}, \ and\ \bibinfo {author} {\bibfnamefont
  {John~M.}\ \bibnamefont {Martinis}},\ }\bibfield  {title} {\enquote {\bibinfo
  {title} {State preservation by repetitive error detection in a
  superconducting quantum circuit},}\ }\href
  {http://dx.doi.org/10.1038/nature14270} {\bibfield  {journal} {\bibinfo
  {journal} {Nature}\ }\textbf {\bibinfo {volume} {519}},\ \bibinfo {pages}
  {66} (\bibinfo {year} {2015})}\BibitemShut {NoStop}%
\bibitem [{\citenamefont {Ofek}\ \emph {et~al.}(2016)\citenamefont {Ofek},
  \citenamefont {Petrenko}, \citenamefont {Heeres}, \citenamefont {Reinhold},
  \citenamefont {Leghtas}, \citenamefont {Vlastakis}, \citenamefont {Liu},
  \citenamefont {Frunzio}, \citenamefont {Girvin}, \citenamefont {Jiang},
  \citenamefont {Mirrahimi}, \citenamefont {Devoret},\ and\ \citenamefont
  {Schoelkopf}}]{Ofek}%
  \BibitemOpen
  \bibfield  {author} {\bibinfo {author} {\bibfnamefont {Nissim}\ \bibnamefont
  {Ofek}}, \bibinfo {author} {\bibfnamefont {Andrei}\ \bibnamefont {Petrenko}},
  \bibinfo {author} {\bibfnamefont {Reinier}\ \bibnamefont {Heeres}}, \bibinfo
  {author} {\bibfnamefont {Philip}\ \bibnamefont {Reinhold}}, \bibinfo {author}
  {\bibfnamefont {Zaki}\ \bibnamefont {Leghtas}}, \bibinfo {author}
  {\bibfnamefont {Brian}\ \bibnamefont {Vlastakis}}, \bibinfo {author}
  {\bibfnamefont {Yehan}\ \bibnamefont {Liu}}, \bibinfo {author} {\bibfnamefont
  {Luigi}\ \bibnamefont {Frunzio}}, \bibinfo {author} {\bibfnamefont {S.~M.}\
  \bibnamefont {Girvin}}, \bibinfo {author} {\bibfnamefont {L.}~\bibnamefont
  {Jiang}}, \bibinfo {author} {\bibfnamefont {Mazyar}\ \bibnamefont
  {Mirrahimi}}, \bibinfo {author} {\bibfnamefont {M.~H.}\ \bibnamefont
  {Devoret}}, \ and\ \bibinfo {author} {\bibfnamefont {R.~J.}\ \bibnamefont
  {Schoelkopf}},\ }\bibfield  {title} {\enquote {\bibinfo {title} {Extending
  the lifetime of a quantum bit with error correction in superconducting
  circuits},}\ }\href {http://dx.doi.org/10.1038/nature18949} {\bibfield
  {journal} {\bibinfo  {journal} {Nature}\ }\textbf {\bibinfo {volume} {536}},\
  \bibinfo {pages} {441} (\bibinfo {year} {2016})}\BibitemShut {NoStop}%
\bibitem [{\citenamefont {Linke}\ \emph {et~al.}(2017)\citenamefont {Linke},
  \citenamefont {Gutierrez}, \citenamefont {Landsman}, \citenamefont {Figgatt},
  \citenamefont {Debnath}, \citenamefont {Brown},\ and\ \citenamefont
  {Monroe}}]{Linke}%
  \BibitemOpen
  \bibfield  {author} {\bibinfo {author} {\bibfnamefont {Norbert~M.}\
  \bibnamefont {Linke}}, \bibinfo {author} {\bibfnamefont {Mauricio}\
  \bibnamefont {Gutierrez}}, \bibinfo {author} {\bibfnamefont {Kevin~A.}\
  \bibnamefont {Landsman}}, \bibinfo {author} {\bibfnamefont {Caroline}\
  \bibnamefont {Figgatt}}, \bibinfo {author} {\bibfnamefont {Shantanu}\
  \bibnamefont {Debnath}}, \bibinfo {author} {\bibfnamefont {Kenneth~R.}\
  \bibnamefont {Brown}}, \ and\ \bibinfo {author} {\bibfnamefont {Christopher}\
  \bibnamefont {Monroe}},\ }\bibfield  {title} {\enquote {\bibinfo {title}
  {Fault-tolerant quantum error detection},}\ }\href
  {http://advances.sciencemag.org/content/3/10/e1701074} {\bibfield  {journal}
  {\bibinfo  {journal} {Science Advances}\ }\textbf {\bibinfo {volume} {3}},\
  \bibinfo {pages} {e1701074} (\bibinfo {year} {2017})}\BibitemShut {NoStop}%
\bibitem [{\citenamefont {Knill}\ and\ \citenamefont
  {Laflamme}(1997)}]{Knill1997}%
  \BibitemOpen
  \bibfield  {author} {\bibinfo {author} {\bibfnamefont {Emanuel}\ \bibnamefont
  {Knill}}\ and\ \bibinfo {author} {\bibfnamefont {Raymond}\ \bibnamefont
  {Laflamme}},\ }\bibfield  {title} {\enquote {\bibinfo {title} {Theory of
  quantum error-correcting codes},}\ }\href {\doibase 10.1103/PhysRevA.55.900}
  {\bibfield  {journal} {\bibinfo  {journal} {Phys. Rev. A}\ }\textbf {\bibinfo
  {volume} {55}},\ \bibinfo {pages} {900--911} (\bibinfo {year}
  {1997})}\BibitemShut {NoStop}%
\bibitem [{\citenamefont {Knill}\ \emph {et~al.}(2000)\citenamefont {Knill},
  \citenamefont {Laflamme},\ and\ \citenamefont {Viola}}]{Knill2000}%
  \BibitemOpen
  \bibfield  {author} {\bibinfo {author} {\bibfnamefont {Emanuel}\ \bibnamefont
  {Knill}}, \bibinfo {author} {\bibfnamefont {Raymond}\ \bibnamefont
  {Laflamme}}, \ and\ \bibinfo {author} {\bibfnamefont {Lorenza}\ \bibnamefont
  {Viola}},\ }\bibfield  {title} {\enquote {\bibinfo {title} {{Theory of
  Quantum Error Correction for General Noise}},}\ }\href {\doibase
  10.1103/PhysRevLett.84.2525} {\bibfield  {journal} {\bibinfo  {journal}
  {Phys. Rev. Lett.}\ }\textbf {\bibinfo {volume} {84}},\ \bibinfo {pages}
  {2525--2528} (\bibinfo {year} {2000})}\BibitemShut {NoStop}%
\bibitem [{\citenamefont {B\'eny}\ and\ \citenamefont
  {Oreshkov}(2010)}]{BenyAQEC}%
  \BibitemOpen
  \bibfield  {author} {\bibinfo {author} {\bibfnamefont {C\'edric}\
  \bibnamefont {B\'eny}}\ and\ \bibinfo {author} {\bibfnamefont {Ognyan}\
  \bibnamefont {Oreshkov}},\ }\bibfield  {title} {\enquote {\bibinfo {title}
  {{General Conditions for Approximate Quantum Error Correction and
  Near-Optimal Recovery Channels}},}\ }\href {\doibase
  10.1103/PhysRevLett.104.120501} {\bibfield  {journal} {\bibinfo  {journal}
  {Phys. Rev. Lett.}\ }\textbf {\bibinfo {volume} {104}},\ \bibinfo {pages}
  {120501} (\bibinfo {year} {2010})}\BibitemShut {NoStop}%
\bibitem [{\citenamefont {Paz}\ and\ \citenamefont {Zurek}(1998)}]{Paz1998}%
  \BibitemOpen
  \bibfield  {author} {\bibinfo {author} {\bibfnamefont {Juan~Pablo}\
  \bibnamefont {Paz}}\ and\ \bibinfo {author} {\bibfnamefont {Wojciech~Hubert}\
  \bibnamefont {Zurek}},\ }\bibfield  {title} {\enquote {\bibinfo {title}
  {{Continuous error correction}},}\ }\href {\doibase 10.1098/rspa.1998.0165}
  {\bibfield  {journal} {\bibinfo  {journal} {Proceedings of the Royal Society
  of London A: Mathematical, Physical and Engineering Sciences}\ }\textbf
  {\bibinfo {volume} {454}},\ \bibinfo {pages} {355--364} (\bibinfo {year}
  {1998})}\BibitemShut {NoStop}%
\bibitem [{\citenamefont {Ahn}\ \emph {et~al.}(2002)\citenamefont {Ahn},
  \citenamefont {Doherty},\ and\ \citenamefont {Landahl}}]{Ahn2002}%
  \BibitemOpen
  \bibfield  {author} {\bibinfo {author} {\bibfnamefont {Charlene}\
  \bibnamefont {Ahn}}, \bibinfo {author} {\bibfnamefont {Andrew~C.}\
  \bibnamefont {Doherty}}, \ and\ \bibinfo {author} {\bibfnamefont {Andrew~J.}\
  \bibnamefont {Landahl}},\ }\bibfield  {title} {\enquote {\bibinfo {title}
  {{Continuous quantum error correction via quantum feedback control}},}\
  }\href {\doibase 10.1103/PhysRevA.65.042301} {\bibfield  {journal} {\bibinfo
  {journal} {Phys. Rev. A}\ }\textbf {\bibinfo {volume} {65}},\ \bibinfo
  {pages} {042301} (\bibinfo {year} {2002})}\BibitemShut {NoStop}%
\bibitem [{\citenamefont {Sarovar}\ and\ \citenamefont
  {Milburn}(2005)}]{Sarovar2005}%
  \BibitemOpen
  \bibfield  {author} {\bibinfo {author} {\bibfnamefont {Mohan}\ \bibnamefont
  {Sarovar}}\ and\ \bibinfo {author} {\bibfnamefont {G.~J.}\ \bibnamefont
  {Milburn}},\ }\bibfield  {title} {\enquote {\bibinfo {title} {{Continuous
  quantum error correction by cooling}},}\ }\href {\doibase
  10.1103/PhysRevA.72.012306} {\bibfield  {journal} {\bibinfo  {journal} {Phys.
  Rev. A}\ }\textbf {\bibinfo {volume} {72}},\ \bibinfo {pages} {012306}
  (\bibinfo {year} {2005})}\BibitemShut {NoStop}%
\bibitem [{\citenamefont {{Gottesman}}(1997)}]{Gottesman}%
  \BibitemOpen
  \bibfield  {author} {\bibinfo {author} {\bibfnamefont {D.}~\bibnamefont
  {{Gottesman}}},\ }\emph {\bibinfo {title} {{Stabilizer codes and quantum
  error correction}}},\ \href@noop {} {Ph.D. thesis},\ \bibinfo  {school}
  {California Institute of Technology} (\bibinfo {year} {1997})\BibitemShut
  {NoStop}%
\bibitem [{\citenamefont {Oreshkov}(2013)}]{Oreshkov2013}%
  \BibitemOpen
  \bibfield  {author} {\bibinfo {author} {\bibfnamefont {O.}~\bibnamefont
  {Oreshkov}},\ }\bibfield  {title} {\enquote {\bibinfo {title}
  {{Continuous-time quantum error correction}},}\ }in\ \href@noop {} {\emph
  {\bibinfo {booktitle} {{Quantum Error Correction}}}},\ \bibinfo {editor}
  {edited by\ \bibinfo {editor} {\bibfnamefont {D.~A.}\ \bibnamefont {Lidar}}\
  and\ \bibinfo {editor} {\bibfnamefont {T.~A.}\ \bibnamefont {Brun}}}\
  (\bibinfo  {publisher} {Cambridge University Press},\ \bibinfo {address}
  {Cambridge, UK},\ \bibinfo {year} {2013})\ Chap.~\bibinfo {chapter} {8}, pp.\
  \bibinfo {pages} {201--228}\BibitemShut {NoStop}%
\bibitem [{\citenamefont {Hsu}\ and\ \citenamefont {Brun}(2016)}]{Hsu2016}%
  \BibitemOpen
  \bibfield  {author} {\bibinfo {author} {\bibfnamefont {Kung-Chuan}\
  \bibnamefont {Hsu}}\ and\ \bibinfo {author} {\bibfnamefont {Todd~A.}\
  \bibnamefont {Brun}},\ }\bibfield  {title} {\enquote {\bibinfo {title}
  {{Method for quantum-jump continuous-time quantum error correction}},}\
  }\href {\doibase 10.1103/PhysRevA.93.022321} {\bibfield  {journal} {\bibinfo
  {journal} {Phys. Rev. A}\ }\textbf {\bibinfo {volume} {93}},\ \bibinfo
  {pages} {022321} (\bibinfo {year} {2016})}\BibitemShut {NoStop}%
\bibitem [{\citenamefont {Kerckhoff}\ \emph {et~al.}(2010)\citenamefont
  {Kerckhoff}, \citenamefont {Nurdin}, \citenamefont {Pavlichin},\ and\
  \citenamefont {Mabuchi}}]{Kerckhoff2010}%
  \BibitemOpen
  \bibfield  {author} {\bibinfo {author} {\bibfnamefont {Joseph}\ \bibnamefont
  {Kerckhoff}}, \bibinfo {author} {\bibfnamefont {Hendra~I.}\ \bibnamefont
  {Nurdin}}, \bibinfo {author} {\bibfnamefont {Dmitri~S.}\ \bibnamefont
  {Pavlichin}}, \ and\ \bibinfo {author} {\bibfnamefont {Hideo}\ \bibnamefont
  {Mabuchi}},\ }\bibfield  {title} {\enquote {\bibinfo {title} {{Designing
  Quantum Memories with Embedded Control: Photonic Circuits for Autonomous
  Quantum Error Correction}},}\ }\href {\doibase
  10.1103/PhysRevLett.105.040502} {\bibfield  {journal} {\bibinfo  {journal}
  {Phys. Rev. Lett.}\ }\textbf {\bibinfo {volume} {105}},\ \bibinfo {pages}
  {040502} (\bibinfo {year} {2010})}\BibitemShut {NoStop}%
\bibitem [{\citenamefont {Kerckhoff}\ \emph {et~al.}(2011)\citenamefont
  {Kerckhoff}, \citenamefont {Pavlichin}, \citenamefont {Chalabi},\ and\
  \citenamefont {Mabuchi}}]{Kerckhoff2011}%
  \BibitemOpen
  \bibfield  {author} {\bibinfo {author} {\bibfnamefont {J.}~\bibnamefont
  {Kerckhoff}}, \bibinfo {author} {\bibfnamefont {D.~S.}\ \bibnamefont
  {Pavlichin}}, \bibinfo {author} {\bibfnamefont {H.}~\bibnamefont {Chalabi}},
  \ and\ \bibinfo {author} {\bibfnamefont {H.}~\bibnamefont {Mabuchi}},\
  }\bibfield  {title} {\enquote {\bibinfo {title} {Design of nanophotonic
  circuits for autonomous subsystem quantum error correction},}\ }\href
  {http://stacks.iop.org/1367-2630/13/i=5/a=055022} {\bibfield  {journal}
  {\bibinfo  {journal} {New Journal of Physics}\ }\textbf {\bibinfo {volume}
  {13}},\ \bibinfo {pages} {055022} (\bibinfo {year} {2011})}\BibitemShut
  {NoStop}%
\bibitem [{\citenamefont {Leghtas}\ \emph {et~al.}(2013)\citenamefont
  {Leghtas}, \citenamefont {Kirchmair}, \citenamefont {Vlastakis},
  \citenamefont {Schoelkopf}, \citenamefont {Devoret},\ and\ \citenamefont
  {Mirrahimi}}]{Leghtas2013}%
  \BibitemOpen
  \bibfield  {author} {\bibinfo {author} {\bibfnamefont {Zaki}\ \bibnamefont
  {Leghtas}}, \bibinfo {author} {\bibfnamefont {Gerhard}\ \bibnamefont
  {Kirchmair}}, \bibinfo {author} {\bibfnamefont {Brian}\ \bibnamefont
  {Vlastakis}}, \bibinfo {author} {\bibfnamefont {Robert~J.}\ \bibnamefont
  {Schoelkopf}}, \bibinfo {author} {\bibfnamefont {Michel~H.}\ \bibnamefont
  {Devoret}}, \ and\ \bibinfo {author} {\bibfnamefont {Mazyar}\ \bibnamefont
  {Mirrahimi}},\ }\bibfield  {title} {\enquote {\bibinfo {title}
  {{Hardware-Efficient Autonomous Quantum Memory Protection}},}\ }\href
  {\doibase 10.1103/PhysRevLett.111.120501} {\bibfield  {journal} {\bibinfo
  {journal} {Phys. Rev. Lett.}\ }\textbf {\bibinfo {volume} {111}},\ \bibinfo
  {pages} {120501} (\bibinfo {year} {2013})}\BibitemShut {NoStop}%
\bibitem [{\citenamefont {{Mirrahimi}}\ \emph {et~al.}(2014)\citenamefont
  {{Mirrahimi}}, \citenamefont {{Leghtas}}, \citenamefont {{Albert}},
  \citenamefont {{Touzard}}, \citenamefont {{Schoelkopf}}, \citenamefont
  {{Jiang}},\ and\ \citenamefont {{Devoret}}}]{AutoCatTheory}%
  \BibitemOpen
  \bibfield  {author} {\bibinfo {author} {\bibfnamefont {M.}~\bibnamefont
  {{Mirrahimi}}}, \bibinfo {author} {\bibfnamefont {Z.}~\bibnamefont
  {{Leghtas}}}, \bibinfo {author} {\bibfnamefont {V.~V.}\ \bibnamefont
  {{Albert}}}, \bibinfo {author} {\bibfnamefont {S.}~\bibnamefont {{Touzard}}},
  \bibinfo {author} {\bibfnamefont {R.~J.}\ \bibnamefont {{Schoelkopf}}},
  \bibinfo {author} {\bibfnamefont {L.}~\bibnamefont {{Jiang}}}, \ and\
  \bibinfo {author} {\bibfnamefont {M.~H.}\ \bibnamefont {{Devoret}}},\
  }\bibfield  {title} {\enquote {\bibinfo {title} {{Dynamically protected
  cat-qubits: a new paradigm for universal quantum computation}},}\ }\href
  {\doibase 10.1088/1367-2630/16/4/045014} {\bibfield  {journal} {\bibinfo
  {journal} {New Journal of Physics}\ }\textbf {\bibinfo {volume} {16}},\
  \bibinfo {eid} {045014} (\bibinfo {year} {2014})}\BibitemShut {NoStop}%
\bibitem [{\citenamefont {Kapit}\ \emph {et~al.}(2015)\citenamefont {Kapit},
  \citenamefont {Chalker},\ and\ \citenamefont {Simon}}]{Kapit2015}%
  \BibitemOpen
  \bibfield  {author} {\bibinfo {author} {\bibfnamefont {Eliot}\ \bibnamefont
  {Kapit}}, \bibinfo {author} {\bibfnamefont {John~T.}\ \bibnamefont
  {Chalker}}, \ and\ \bibinfo {author} {\bibfnamefont {Steven~H.}\ \bibnamefont
  {Simon}},\ }\bibfield  {title} {\enquote {\bibinfo {title} {Passive
  correction of quantum logical errors in a driven, dissipative system: A
  blueprint for an analog quantum code fabric},}\ }\href {\doibase
  10.1103/PhysRevA.91.062324} {\bibfield  {journal} {\bibinfo  {journal} {Phys.
  Rev. A}\ }\textbf {\bibinfo {volume} {91}},\ \bibinfo {pages} {062324}
  (\bibinfo {year} {2015})}\BibitemShut {NoStop}%
\bibitem [{\citenamefont {Kapit}(2016)}]{Kapit2016}%
  \BibitemOpen
  \bibfield  {author} {\bibinfo {author} {\bibfnamefont {Eliot}\ \bibnamefont
  {Kapit}},\ }\bibfield  {title} {\enquote {\bibinfo {title}
  {{Hardware-Efficient and Fully Autonomous Quantum Error Correction in
  Superconducting Circuits}},}\ }\href {\doibase
  10.1103/PhysRevLett.116.150501} {\bibfield  {journal} {\bibinfo  {journal}
  {Phys. Rev. Lett.}\ }\textbf {\bibinfo {volume} {116}},\ \bibinfo {pages}
  {150501} (\bibinfo {year} {2016})}\BibitemShut {NoStop}%
\bibitem [{\citenamefont {Reiter}\ \emph {et~al.}(2017)\citenamefont {Reiter},
  \citenamefont {S{\o}rensen}, \citenamefont {Zoller},\ and\ \citenamefont
  {Muschik}}]{Reiter2017}%
  \BibitemOpen
  \bibfield  {author} {\bibinfo {author} {\bibfnamefont {F.}~\bibnamefont
  {Reiter}}, \bibinfo {author} {\bibfnamefont {A.~S.}\ \bibnamefont
  {S{\o}rensen}}, \bibinfo {author} {\bibfnamefont {P.}~\bibnamefont {Zoller}},
  \ and\ \bibinfo {author} {\bibfnamefont {C.~A.}\ \bibnamefont {Muschik}},\
  }\bibfield  {title} {\enquote {\bibinfo {title} {Dissipative quantum error
  correction and application to quantum sensing with trapped ions},}\ }\href
  {\doibase 10.1038/s41467-017-01895-5} {\bibfield  {journal} {\bibinfo
  {journal} {Nature Communications}\ }\textbf {\bibinfo {volume} {8}},\
  \bibinfo {pages} {1822} (\bibinfo {year} {2017})}\BibitemShut {NoStop}%
\bibitem [{\citenamefont {Cohen}(2017)}]{Cohen2017}%
  \BibitemOpen
  \bibfield  {author} {\bibinfo {author} {\bibfnamefont {Joachim}\ \bibnamefont
  {Cohen}},\ }\emph {\bibinfo {title} {{Autonomous quantum error correction
  with superconducting qubits}}},\ \href
  {https://tel.archives-ouvertes.fr/tel-01545186} {\bibinfo {type} {Theses}},\
  \bibinfo  {school} {{PSL Research University}} (\bibinfo {year}
  {2017})\BibitemShut {NoStop}%
\bibitem [{\citenamefont {{Albert}}\ \emph {et~al.}()\citenamefont {{Albert}},
  \citenamefont {{Mundhada}}, \citenamefont {{Grimm}}, \citenamefont
  {{Touzard}}, \citenamefont {{Devoret}},\ and\ \citenamefont
  {{Jiang}}}]{Albert2018}%
  \BibitemOpen
  \bibfield  {author} {\bibinfo {author} {\bibfnamefont {V.~V.}\ \bibnamefont
  {{Albert}}}, \bibinfo {author} {\bibfnamefont {S.~O.}\ \bibnamefont
  {{Mundhada}}}, \bibinfo {author} {\bibfnamefont {A.}~\bibnamefont {{Grimm}}},
  \bibinfo {author} {\bibfnamefont {S.}~\bibnamefont {{Touzard}}}, \bibinfo
  {author} {\bibfnamefont {M.~H.}\ \bibnamefont {{Devoret}}}, \ and\ \bibinfo
  {author} {\bibfnamefont {L.}~\bibnamefont {{Jiang}}},\ }\bibfield  {title}
  {\enquote {\bibinfo {title} {{{Multimode cat codes}}},}\ }\href@noop {} {\
  }\Eprint {http://arxiv.org/abs/1801.05897} {arXiv:1801.05897 [quant-ph]}
  \BibitemShut {NoStop}%
\bibitem [{\citenamefont {Leghtas}\ \emph {et~al.}(2015)\citenamefont
  {Leghtas}, \citenamefont {Touzard}, \citenamefont {Pop}, \citenamefont {Kou},
  \citenamefont {Vlastakis}, \citenamefont {Petrenko}, \citenamefont {Sliwa},
  \citenamefont {Narla}, \citenamefont {Shankar}, \citenamefont {Hatridge},
  \citenamefont {Reagor}, \citenamefont {Frunzio}, \citenamefont {Schoelkopf},
  \citenamefont {Mirrahimi},\ and\ \citenamefont {Devoret}}]{Leghtas853}%
  \BibitemOpen
  \bibfield  {author} {\bibinfo {author} {\bibfnamefont {Z.}~\bibnamefont
  {Leghtas}}, \bibinfo {author} {\bibfnamefont {S.}~\bibnamefont {Touzard}},
  \bibinfo {author} {\bibfnamefont {I.~M.}\ \bibnamefont {Pop}}, \bibinfo
  {author} {\bibfnamefont {A.}~\bibnamefont {Kou}}, \bibinfo {author}
  {\bibfnamefont {B.}~\bibnamefont {Vlastakis}}, \bibinfo {author}
  {\bibfnamefont {A.}~\bibnamefont {Petrenko}}, \bibinfo {author}
  {\bibfnamefont {K.~M.}\ \bibnamefont {Sliwa}}, \bibinfo {author}
  {\bibfnamefont {A.}~\bibnamefont {Narla}}, \bibinfo {author} {\bibfnamefont
  {S.}~\bibnamefont {Shankar}}, \bibinfo {author} {\bibfnamefont {M.~J.}\
  \bibnamefont {Hatridge}}, \bibinfo {author} {\bibfnamefont {M.}~\bibnamefont
  {Reagor}}, \bibinfo {author} {\bibfnamefont {L.}~\bibnamefont {Frunzio}},
  \bibinfo {author} {\bibfnamefont {R.~J.}\ \bibnamefont {Schoelkopf}},
  \bibinfo {author} {\bibfnamefont {M.}~\bibnamefont {Mirrahimi}}, \ and\
  \bibinfo {author} {\bibfnamefont {M.~H.}\ \bibnamefont {Devoret}},\
  }\bibfield  {title} {\enquote {\bibinfo {title} {Confining the state of light
  to a quantum manifold by engineered two-photon loss},}\ }\href {\doibase
  10.1126/science.aaa2085} {\bibfield  {journal} {\bibinfo  {journal}
  {Science}\ }\textbf {\bibinfo {volume} {347}},\ \bibinfo {pages} {853--857}
  (\bibinfo {year} {2015})}\BibitemShut {NoStop}%
\bibitem [{\citenamefont {Touzard}\ \emph {et~al.}(2018)\citenamefont
  {Touzard}, \citenamefont {Grimm}, \citenamefont {Leghtas}, \citenamefont
  {Mundhada}, \citenamefont {Reinhold}, \citenamefont {Axline}, \citenamefont
  {Reagor}, \citenamefont {Chou}, \citenamefont {Blumoff}, \citenamefont
  {Sliwa}, \citenamefont {Shankar}, \citenamefont {Frunzio}, \citenamefont
  {Schoelkopf}, \citenamefont {Mirrahimi},\ and\ \citenamefont
  {Devoret}}]{Touzard2017}%
  \BibitemOpen
  \bibfield  {author} {\bibinfo {author} {\bibfnamefont {S.}~\bibnamefont
  {Touzard}}, \bibinfo {author} {\bibfnamefont {A.}~\bibnamefont {Grimm}},
  \bibinfo {author} {\bibfnamefont {Z.}~\bibnamefont {Leghtas}}, \bibinfo
  {author} {\bibfnamefont {S.~O.}\ \bibnamefont {Mundhada}}, \bibinfo {author}
  {\bibfnamefont {P.}~\bibnamefont {Reinhold}}, \bibinfo {author}
  {\bibfnamefont {C.}~\bibnamefont {Axline}}, \bibinfo {author} {\bibfnamefont
  {M.}~\bibnamefont {Reagor}}, \bibinfo {author} {\bibfnamefont
  {K.}~\bibnamefont {Chou}}, \bibinfo {author} {\bibfnamefont {J.}~\bibnamefont
  {Blumoff}}, \bibinfo {author} {\bibfnamefont {K.~M.}\ \bibnamefont {Sliwa}},
  \bibinfo {author} {\bibfnamefont {S.}~\bibnamefont {Shankar}}, \bibinfo
  {author} {\bibfnamefont {L.}~\bibnamefont {Frunzio}}, \bibinfo {author}
  {\bibfnamefont {R.~J.}\ \bibnamefont {Schoelkopf}}, \bibinfo {author}
  {\bibfnamefont {M.}~\bibnamefont {Mirrahimi}}, \ and\ \bibinfo {author}
  {\bibfnamefont {M.~H.}\ \bibnamefont {Devoret}},\ }\bibfield  {title}
  {\enquote {\bibinfo {title} {Coherent oscillations inside a quantum manifold
  stabilized by dissipation},}\ }\href {\doibase 10.1103/PhysRevX.8.021005}
  {\bibfield  {journal} {\bibinfo  {journal} {Phys. Rev. X}\ }\textbf {\bibinfo
  {volume} {8}},\ \bibinfo {pages} {021005} (\bibinfo {year}
  {2018})}\BibitemShut {NoStop}%
\bibitem [{\citenamefont {Michael}\ \emph {et~al.}(2016)\citenamefont
  {Michael}, \citenamefont {Silveri}, \citenamefont {Brierley}, \citenamefont
  {Albert}, \citenamefont {Salmilehto}, \citenamefont {Jiang},\ and\
  \citenamefont {Girvin}}]{BinomialCode}%
  \BibitemOpen
  \bibfield  {author} {\bibinfo {author} {\bibfnamefont {Marios~H.}\
  \bibnamefont {Michael}}, \bibinfo {author} {\bibfnamefont {Matti}\
  \bibnamefont {Silveri}}, \bibinfo {author} {\bibfnamefont {R.~T.}\
  \bibnamefont {Brierley}}, \bibinfo {author} {\bibfnamefont {Victor~V.}\
  \bibnamefont {Albert}}, \bibinfo {author} {\bibfnamefont {Juha}\ \bibnamefont
  {Salmilehto}}, \bibinfo {author} {\bibfnamefont {Liang}\ \bibnamefont
  {Jiang}}, \ and\ \bibinfo {author} {\bibfnamefont {S.~M.}\ \bibnamefont
  {Girvin}},\ }\bibfield  {title} {\enquote {\bibinfo {title} {{New Class of
  Quantum Error-Correcting Codes for a Bosonic Mode}},}\ }\href {\doibase
  10.1103/PhysRevX.6.031006} {\bibfield  {journal} {\bibinfo  {journal} {Phys.
  Rev. X}\ }\textbf {\bibinfo {volume} {6}},\ \bibinfo {pages} {031006}
  (\bibinfo {year} {2016})}\BibitemShut {NoStop}%
\bibitem [{\citenamefont {Zanardi}\ and\ \citenamefont
  {Rasetti}(1997)}]{Zanardi1997}%
  \BibitemOpen
  \bibfield  {author} {\bibinfo {author} {\bibfnamefont {P.}~\bibnamefont
  {Zanardi}}\ and\ \bibinfo {author} {\bibfnamefont {M.}~\bibnamefont
  {Rasetti}},\ }\bibfield  {title} {\enquote {\bibinfo {title} {{Noiseless
  Quantum Codes}},}\ }\href {\doibase 10.1103/PhysRevLett.79.3306} {\bibfield
  {journal} {\bibinfo  {journal} {Phys. Rev. Lett.}\ }\textbf {\bibinfo
  {volume} {79}},\ \bibinfo {pages} {3306--3309} (\bibinfo {year}
  {1997})}\BibitemShut {NoStop}%
\bibitem [{\citenamefont {Lidar}\ \emph {et~al.}(1998)\citenamefont {Lidar},
  \citenamefont {Chuang},\ and\ \citenamefont {Whaley}}]{Lidar1998}%
  \BibitemOpen
  \bibfield  {author} {\bibinfo {author} {\bibfnamefont {D.~A.}\ \bibnamefont
  {Lidar}}, \bibinfo {author} {\bibfnamefont {I.~L.}\ \bibnamefont {Chuang}}, \
  and\ \bibinfo {author} {\bibfnamefont {K.~B.}\ \bibnamefont {Whaley}},\
  }\bibfield  {title} {\enquote {\bibinfo {title} {{Decoherence-Free Subspaces
  for Quantum Computation}},}\ }\href {\doibase 10.1103/PhysRevLett.81.2594}
  {\bibfield  {journal} {\bibinfo  {journal} {Phys. Rev. Lett.}\ }\textbf
  {\bibinfo {volume} {81}},\ \bibinfo {pages} {2594--2597} (\bibinfo {year}
  {1998})}\BibitemShut {NoStop}%
\bibitem [{\citenamefont {Kempe}\ \emph {et~al.}(2001)\citenamefont {Kempe},
  \citenamefont {Bacon}, \citenamefont {Lidar},\ and\ \citenamefont
  {Whaley}}]{Kempe2001}%
  \BibitemOpen
  \bibfield  {author} {\bibinfo {author} {\bibfnamefont {J.}~\bibnamefont
  {Kempe}}, \bibinfo {author} {\bibfnamefont {D.}~\bibnamefont {Bacon}},
  \bibinfo {author} {\bibfnamefont {D.~A.}\ \bibnamefont {Lidar}}, \ and\
  \bibinfo {author} {\bibfnamefont {K.~B.}\ \bibnamefont {Whaley}},\ }\bibfield
   {title} {\enquote {\bibinfo {title} {Theory of decoherence-free
  fault-tolerant universal quantum computation},}\ }\href {\doibase
  10.1103/PhysRevA.63.042307} {\bibfield  {journal} {\bibinfo  {journal} {Phys.
  Rev. A}\ }\textbf {\bibinfo {volume} {63}},\ \bibinfo {pages} {042307}
  (\bibinfo {year} {2001})}\BibitemShut {NoStop}%
\bibitem [{\citenamefont {Lidar}(2014)}]{Lidar2012}%
  \BibitemOpen
  \bibfield  {author} {\bibinfo {author} {\bibfnamefont {Daniel~A.}\
  \bibnamefont {Lidar}},\ }\enquote {\bibinfo {title} {{Review of
  Decoherence-Free Subspaces, Noiseless Subsystems, and Dynamical
  Decoupling}},}\ in\ \href {\doibase 10.1002/9781118742631.ch11} {\emph
  {\bibinfo {booktitle} {{Quantum Information and Computation for
  Chemistry}}}}\ (\bibinfo  {publisher} {John Wiley \& Sons, Inc.},\ \bibinfo
  {year} {2014})\ pp.\ \bibinfo {pages} {295--354}\BibitemShut {NoStop}%
\bibitem [{\citenamefont {B\'eny}(2011)}]{Beny2011}%
  \BibitemOpen
  \bibfield  {author} {\bibinfo {author} {\bibfnamefont {C\'edric}\
  \bibnamefont {B\'eny}},\ }\bibfield  {title} {\enquote {\bibinfo {title}
  {{Perturbative Quantum Error Correction}},}\ }\href {\doibase
  10.1103/PhysRevLett.107.080501} {\bibfield  {journal} {\bibinfo  {journal}
  {Phys. Rev. Lett.}\ }\textbf {\bibinfo {volume} {107}},\ \bibinfo {pages}
  {080501} (\bibinfo {year} {2011})}\BibitemShut {NoStop}%
\bibitem [{\citenamefont {Friedberg}\ \emph {et~al.}(2003)\citenamefont
  {Friedberg}, \citenamefont {Insel},\ and\ \citenamefont
  {Spence}}]{friedberg2003linear}%
  \BibitemOpen
  \bibfield  {author} {\bibinfo {author} {\bibfnamefont {S.~H.}\ \bibnamefont
  {Friedberg}}, \bibinfo {author} {\bibfnamefont {A.~J.}\ \bibnamefont
  {Insel}}, \ and\ \bibinfo {author} {\bibfnamefont {L.~E.}\ \bibnamefont
  {Spence}},\ }\href {https://books.google.co.kr/books?id=HCUlAQAAIAAJ} {\emph
  {\bibinfo {title} {{Linear Algebra}}}},\ Featured Titles for Linear Algebra
  (Advanced) Series\ (\bibinfo  {publisher} {Pearson Education},\ \bibinfo
  {year} {2003})\BibitemShut {NoStop}%
\bibitem [{\citenamefont {Sun}\ \emph {et~al.}(2014)\citenamefont {Sun},
  \citenamefont {Petrenko}, \citenamefont {Leghtas}, \citenamefont {Vlastakis},
  \citenamefont {Kirchmair}, \citenamefont {Sliwa}, \citenamefont {Narla},
  \citenamefont {Hatridge}, \citenamefont {Shankar}, \citenamefont {Blumoff},
  \citenamefont {Frunzio}, \citenamefont {Mirrahimi}, \citenamefont {Devoret},\
  and\ \citenamefont {Schoelkopf}}]{Sun2014}%
  \BibitemOpen
  \bibfield  {author} {\bibinfo {author} {\bibfnamefont {L.}~\bibnamefont
  {Sun}}, \bibinfo {author} {\bibfnamefont {A.}~\bibnamefont {Petrenko}},
  \bibinfo {author} {\bibfnamefont {Z.}~\bibnamefont {Leghtas}}, \bibinfo
  {author} {\bibfnamefont {B.}~\bibnamefont {Vlastakis}}, \bibinfo {author}
  {\bibfnamefont {G.}~\bibnamefont {Kirchmair}}, \bibinfo {author}
  {\bibfnamefont {K.~M.}\ \bibnamefont {Sliwa}}, \bibinfo {author}
  {\bibfnamefont {A.}~\bibnamefont {Narla}}, \bibinfo {author} {\bibfnamefont
  {M.}~\bibnamefont {Hatridge}}, \bibinfo {author} {\bibfnamefont
  {S.}~\bibnamefont {Shankar}}, \bibinfo {author} {\bibfnamefont
  {J.}~\bibnamefont {Blumoff}}, \bibinfo {author} {\bibfnamefont
  {L.}~\bibnamefont {Frunzio}}, \bibinfo {author} {\bibfnamefont
  {M.}~\bibnamefont {Mirrahimi}}, \bibinfo {author} {\bibfnamefont {M.~H.}\
  \bibnamefont {Devoret}}, \ and\ \bibinfo {author} {\bibfnamefont {R.~J.}\
  \bibnamefont {Schoelkopf}},\ }\bibfield  {title} {\enquote {\bibinfo {title}
  {Tracking photon jumps with repeated quantum non-demolition parity
  measurements},}\ }\href {http://dx.doi.org/10.1038/nature13436} {\bibfield
  {journal} {\bibinfo  {journal} {Nature}\ }\textbf {\bibinfo {volume} {511}},\
  \bibinfo {pages} {444} (\bibinfo {year} {2014})}\BibitemShut {NoStop}%
\bibitem [{\citenamefont {{Rosenblum}}\ \emph {et~al.}()\citenamefont
  {{Rosenblum}}, \citenamefont {{Reinhold}}, \citenamefont {{Mirrahimi}},
  \citenamefont {{Jiang}}, \citenamefont {{Frunzio}},\ and\ \citenamefont
  {{Schoelkopf}}}]{Rosenblum2018}%
  \BibitemOpen
  \bibfield  {author} {\bibinfo {author} {\bibfnamefont {S.}~\bibnamefont
  {{Rosenblum}}}, \bibinfo {author} {\bibfnamefont {P.}~\bibnamefont
  {{Reinhold}}}, \bibinfo {author} {\bibfnamefont {M.}~\bibnamefont
  {{Mirrahimi}}}, \bibinfo {author} {\bibfnamefont {L.}~\bibnamefont
  {{Jiang}}}, \bibinfo {author} {\bibfnamefont {L.}~\bibnamefont {{Frunzio}}},
  \ and\ \bibinfo {author} {\bibfnamefont {R.~J.}\ \bibnamefont
  {{Schoelkopf}}},\ }\bibfield  {title} {\enquote {\bibinfo {title}
  {{Fault-tolerant measurement of a quantum error syndrome}},}\ }\href@noop {}
  {\ }\Eprint {http://arxiv.org/abs/1803.00102} {arXiv:1803.00102 [quant-ph]}
  \BibitemShut {NoStop}%
\bibitem [{\citenamefont {Reiter}\ and\ \citenamefont
  {S\o{}rensen}(2012)}]{Reiter2012}%
  \BibitemOpen
  \bibfield  {author} {\bibinfo {author} {\bibfnamefont {Florentin}\
  \bibnamefont {Reiter}}\ and\ \bibinfo {author} {\bibfnamefont {Anders~S.}\
  \bibnamefont {S\o{}rensen}},\ }\bibfield  {title} {\enquote {\bibinfo {title}
  {Effective operator formalism for open quantum systems},}\ }\href {\doibase
  10.1103/PhysRevA.85.032111} {\bibfield  {journal} {\bibinfo  {journal} {Phys.
  Rev. A}\ }\textbf {\bibinfo {volume} {85}},\ \bibinfo {pages} {032111}
  (\bibinfo {year} {2012})}\BibitemShut {NoStop}%
\bibitem [{\citenamefont {Zanardi}\ \emph {et~al.}(2016)\citenamefont
  {Zanardi}, \citenamefont {Marshall},\ and\ \citenamefont
  {Campos~Venuti}}]{Zanardi2016}%
  \BibitemOpen
  \bibfield  {author} {\bibinfo {author} {\bibfnamefont {Paolo}\ \bibnamefont
  {Zanardi}}, \bibinfo {author} {\bibfnamefont {Jeffrey}\ \bibnamefont
  {Marshall}}, \ and\ \bibinfo {author} {\bibfnamefont {Lorenzo}\ \bibnamefont
  {Campos~Venuti}},\ }\bibfield  {title} {\enquote {\bibinfo {title}
  {{Dissipative universal Lindbladian simulation}},}\ }\href {\doibase
  10.1103/PhysRevA.93.022312} {\bibfield  {journal} {\bibinfo  {journal} {Phys.
  Rev. A}\ }\textbf {\bibinfo {volume} {93}},\ \bibinfo {pages} {022312}
  (\bibinfo {year} {2016})}\BibitemShut {NoStop}%
\bibitem [{\citenamefont {Koch}\ \emph {et~al.}(2007)\citenamefont {Koch},
  \citenamefont {Yu}, \citenamefont {Gambetta}, \citenamefont {Houck},
  \citenamefont {Schuster}, \citenamefont {Majer}, \citenamefont {Blais},
  \citenamefont {Devoret}, \citenamefont {Girvin},\ and\ \citenamefont
  {Schoelkopf}}]{Koch2007}%
  \BibitemOpen
  \bibfield  {author} {\bibinfo {author} {\bibfnamefont {Jens}\ \bibnamefont
  {Koch}}, \bibinfo {author} {\bibfnamefont {Terri~M.}\ \bibnamefont {Yu}},
  \bibinfo {author} {\bibfnamefont {Jay}\ \bibnamefont {Gambetta}}, \bibinfo
  {author} {\bibfnamefont {A.~A.}\ \bibnamefont {Houck}}, \bibinfo {author}
  {\bibfnamefont {D.~I.}\ \bibnamefont {Schuster}}, \bibinfo {author}
  {\bibfnamefont {J.}~\bibnamefont {Majer}}, \bibinfo {author} {\bibfnamefont
  {Alexandre}\ \bibnamefont {Blais}}, \bibinfo {author} {\bibfnamefont {M.~H.}\
  \bibnamefont {Devoret}}, \bibinfo {author} {\bibfnamefont {S.~M.}\
  \bibnamefont {Girvin}}, \ and\ \bibinfo {author} {\bibfnamefont {R.~J.}\
  \bibnamefont {Schoelkopf}},\ }\bibfield  {title} {\enquote {\bibinfo {title}
  {Charge-insensitive qubit design derived from the cooper pair box},}\ }\href
  {\doibase 10.1103/PhysRevA.76.042319} {\bibfield  {journal} {\bibinfo
  {journal} {Phys. Rev. A}\ }\textbf {\bibinfo {volume} {76}},\ \bibinfo
  {pages} {042319} (\bibinfo {year} {2007})}\BibitemShut {NoStop}%
\bibitem [{\citenamefont {Schreier}\ \emph {et~al.}(2008)\citenamefont
  {Schreier}, \citenamefont {Houck}, \citenamefont {Koch}, \citenamefont
  {Schuster}, \citenamefont {Johnson}, \citenamefont {Chow}, \citenamefont
  {Gambetta}, \citenamefont {Majer}, \citenamefont {Frunzio}, \citenamefont
  {Devoret}, \citenamefont {Girvin},\ and\ \citenamefont
  {Schoelkopf}}]{Schreier2008}%
  \BibitemOpen
  \bibfield  {author} {\bibinfo {author} {\bibfnamefont {J.~A.}\ \bibnamefont
  {Schreier}}, \bibinfo {author} {\bibfnamefont {A.~A.}\ \bibnamefont {Houck}},
  \bibinfo {author} {\bibfnamefont {Jens}\ \bibnamefont {Koch}}, \bibinfo
  {author} {\bibfnamefont {D.~I.}\ \bibnamefont {Schuster}}, \bibinfo {author}
  {\bibfnamefont {B.~R.}\ \bibnamefont {Johnson}}, \bibinfo {author}
  {\bibfnamefont {J.~M.}\ \bibnamefont {Chow}}, \bibinfo {author}
  {\bibfnamefont {J.~M.}\ \bibnamefont {Gambetta}}, \bibinfo {author}
  {\bibfnamefont {J.}~\bibnamefont {Majer}}, \bibinfo {author} {\bibfnamefont
  {L.}~\bibnamefont {Frunzio}}, \bibinfo {author} {\bibfnamefont {M.~H.}\
  \bibnamefont {Devoret}}, \bibinfo {author} {\bibfnamefont {S.~M.}\
  \bibnamefont {Girvin}}, \ and\ \bibinfo {author} {\bibfnamefont {R.~J.}\
  \bibnamefont {Schoelkopf}},\ }\bibfield  {title} {\enquote {\bibinfo {title}
  {Suppressing charge noise decoherence in superconducting charge qubits},}\
  }\href {\doibase 10.1103/PhysRevB.77.180502} {\bibfield  {journal} {\bibinfo
  {journal} {Phys. Rev. B}\ }\textbf {\bibinfo {volume} {77}},\ \bibinfo
  {pages} {180502} (\bibinfo {year} {2008})}\BibitemShut {NoStop}%
\bibitem [{\citenamefont {Mundhada}\ \emph {et~al.}(2017)\citenamefont
  {Mundhada}, \citenamefont {Grimm}, \citenamefont {Touzard}, \citenamefont
  {Vool}, \citenamefont {Shankar}, \citenamefont {Devoret},\ and\ \citenamefont
  {Mirrahimi}}]{Mundhada2017}%
  \BibitemOpen
  \bibfield  {author} {\bibinfo {author} {\bibfnamefont {S.~O.}\ \bibnamefont
  {Mundhada}}, \bibinfo {author} {\bibfnamefont {A.}~\bibnamefont {Grimm}},
  \bibinfo {author} {\bibfnamefont {S.}~\bibnamefont {Touzard}}, \bibinfo
  {author} {\bibfnamefont {U.}~\bibnamefont {Vool}}, \bibinfo {author}
  {\bibfnamefont {S.}~\bibnamefont {Shankar}}, \bibinfo {author} {\bibfnamefont
  {M.~H.}\ \bibnamefont {Devoret}}, \ and\ \bibinfo {author} {\bibfnamefont
  {M}~\bibnamefont {Mirrahimi}},\ }\bibfield  {title} {\enquote {\bibinfo
  {title} {Generating higher-order quantum dissipation from lower-order
  parametric processes},}\ }\href
  {http://stacks.iop.org/2058-9565/2/i=2/a=024005} {\bibfield  {journal}
  {\bibinfo  {journal} {Quantum Science and Technology}\ }\textbf {\bibinfo
  {volume} {2}},\ \bibinfo {pages} {024005} (\bibinfo {year}
  {2017})}\BibitemShut {NoStop}%
\bibitem [{\citenamefont {Baumgartner}\ and\ \citenamefont
  {Narnhofer}(2008)}]{Baumgartner2008}%
  \BibitemOpen
  \bibfield  {author} {\bibinfo {author} {\bibfnamefont {Bernhard}\
  \bibnamefont {Baumgartner}}\ and\ \bibinfo {author} {\bibfnamefont {Heide}\
  \bibnamefont {Narnhofer}},\ }\bibfield  {title} {\enquote {\bibinfo {title}
  {{Analysis of quantum semigroups with GKS--Lindblad generators II.
  General}},}\ }\href {\doibase 10.1088/1751-8113/41/39/395303} {\bibfield
  {journal} {\bibinfo  {journal} {J. Phys. A}\ }\textbf {\bibinfo {volume}
  {41}},\ \bibinfo {pages} {395303} (\bibinfo {year} {2008})}\BibitemShut
  {NoStop}%
\bibitem [{\citenamefont {Kato}(1995)}]{kato1995perturb}%
  \BibitemOpen
  \bibfield  {author} {\bibinfo {author} {\bibfnamefont {Tosio}\ \bibnamefont
  {Kato}},\ }\href {https://www.springer.com/in/book/9783540586616} {\emph
  {\bibinfo {title} {{Perturbation Theory for Linear Operators}}}}\ (\bibinfo
  {publisher} {Springer},\ \bibinfo {year} {1995})\BibitemShut {NoStop}%
\bibitem [{\citenamefont {Zanardi}\ and\ \citenamefont
  {Campos~Venuti}(2015)}]{Zanardi2015}%
  \BibitemOpen
  \bibfield  {author} {\bibinfo {author} {\bibfnamefont {Paolo}\ \bibnamefont
  {Zanardi}}\ and\ \bibinfo {author} {\bibfnamefont {Lorenzo}\ \bibnamefont
  {Campos~Venuti}},\ }\bibfield  {title} {\enquote {\bibinfo {title} {Geometry,
  robustness, and emerging unitarity in dissipation-projected dynamics},}\
  }\href {\doibase 10.1103/PhysRevA.91.052324} {\bibfield  {journal} {\bibinfo
  {journal} {Phys. Rev. A}\ }\textbf {\bibinfo {volume} {91}},\ \bibinfo
  {pages} {052324} (\bibinfo {year} {2015})}\BibitemShut {NoStop}%
\bibitem [{\citenamefont {Macieszczak}\ \emph {et~al.}(2016)\citenamefont
  {Macieszczak}, \citenamefont {Gu\ifmmode \mbox{\c{t}}\else
  \c{t}\fi{}\ifmmode~\u{a}\else \u{a}\fi{}}, \citenamefont {Lesanovsky},\ and\
  \citenamefont {Garrahan}}]{Macieszczak2016}%
  \BibitemOpen
  \bibfield  {author} {\bibinfo {author} {\bibfnamefont {Katarzyna}\
  \bibnamefont {Macieszczak}}, \bibinfo {author} {\bibfnamefont
  {M\u{a}d\u{a}lin}\ \bibnamefont {Gu\ifmmode \mbox{\c{t}}\else
  \c{t}\fi{}\ifmmode~\u{a}\else \u{a}\fi{}}}, \bibinfo {author} {\bibfnamefont
  {Igor}\ \bibnamefont {Lesanovsky}}, \ and\ \bibinfo {author} {\bibfnamefont
  {Juan~P.}\ \bibnamefont {Garrahan}},\ }\bibfield  {title} {\enquote {\bibinfo
  {title} {{Towards a Theory of Metastability in Open Quantum Dynamics}},}\
  }\href {\doibase 10.1103/PhysRevLett.116.240404} {\bibfield  {journal}
  {\bibinfo  {journal} {Phys. Rev. Lett.}\ }\textbf {\bibinfo {volume} {116}},\
  \bibinfo {pages} {240404} (\bibinfo {year} {2016})}\BibitemShut {NoStop}%
\end{thebibliography}%

\end{document}